\newtheorem{defn}{Definition}
\newtheorem{assump}{Assumption}
\newtheorem{prop}{Proposition}
\newtheorem{thm}{Theorem}
\newtheorem{prob}{Problem}
\newtheorem{rem}{Remark}
\newtheorem{cor}{Corollary}
\begin{document}

\title{Cluster Assignment in Multi-Agent Systems : Sparsity Bounds and Fault Tolerance\protect\thanks{This research was supported by grant no. 2285/20 from the Israel Science Foundation.}}

\author[1]{Miel Sharf*}

\author[2]{Daniel Zelazo}

\authormark{Sharf and Zelazo}

\address[1]{\orgname{Jether Energy Research}, \orgaddress{\state{Tel Aviv}, \country{Israel}}}

\address[2]{\orgdiv{Faculty of Aerospace Engineering}, \orgname{Technion - Israel Institute of Technology}, \orgaddress{\state{Haifa}, \country{Israel}}}

\corres{*Miel Sharf, \email{mielsharf@gmail.com}}


\abstract[Summary]{We study cluster assignment in homogeneous diffusive multi-agent networks. Given the number of clusters and agents within each cluster, we design the network graph ensuring the system will converge to the prescribed cluster configuration. Using recent results linking clustering and symmetries,  we show that it is possible to design an oriented graph for which the action of the automorphism group of the graph has orbits of predetermined sizes, guaranteeing the network will converge to the prescribed cluster configuration. We provide bounds on the number of edges needed to construct these graphs along with a constructive approach for their generation. We also consider the robustness of the clustering process under agent malfunction.} 
\keywords{Clustering, Diffusive Coupling, Fault Tolerance, Graph Theory, Multi-Agent Networks, Sparsity}


\maketitle

\footnotetext{\textbf{Abbreviations:} MAS, Multi-Agent System(s);}

\section{Introduction}\label{sec.intro}
One of the most important tasks in the field of multi-agent systems (MAS) is reaching agreement. Distributed protocols guaranteeing the agents reach agreement appear in many different fields, including robotics \cite{Chopra2006}, sensor networks \cite{OlfatiSaber2007cdc}, and distributed computation \cite{Xiao2004}. A natural generalization is the \emph{cluster agreement} problem, which seeks to drive agents into groups, so that agents within the same group reach an agreement. The clustering problem appears in social networks \cite{Lancichinetti2012}, neuroscience \cite{Schnitzler2005}, and biomimetics \cite{Passino2002}.
Clustering has been studied using different approaches, e.g. network optimization \cite{Burger2011TAC}, pinning control \cite{Qin2013}, inter-cluster nonidentical inputs \cite{Han2013}, and exploiting the structural balance of the underlying graph \cite{Altafini2013}. We tackle the cluster agreement problem by using \emph{symmetries} within the MAS. Recent works on MAS apply graph symmetries to study various problems, e.g., controllability and observability of MAS \cite{Rahmani2007, Chapman2014, Chapman2015, Notarstefano2013TAC}. 

Our recent work \cite{Sharf2019b} introduced the notion of the \emph{weak automorphism group of a MAS}, combining the two concepts of the automorphism group for graphs and weak equivalence of dynamical systems.
The former summarizes all symmetries for a given graph, while the latter characterizes similarities between achievable steady-states of heterogeneous (dynamical) agents. Therefore, the weak automorphisms of a MAS can be  understood as permutations of the nodes in the underlying graph that preserve both graph symmetries and certain input-output properties of the corresponding agents. More specifically, \cite{Sharf2019b} focused on clustering for \emph{diffusively coupled networks},  and showed that under appropriate passivity assumptions, these diffusively coupled networks converge to a clustered steady-state solution, where two agents are in the same cluster if and only if there exists a weak automorphism mapping one to the other. Thus, the clustering of the MAS can be understood by studying the action of the weak automorphism group on the underlying interaction graph. 

We focus in this paper on homogeneous networks, i.e., networks where the agent dynamics are all identical, noting that the weak automorphism group is identical to the automorphism group of the underlying graph in this case. We wish to design graphs ensuring the MAS will converge to a prescribed cluster configuration, i.e., specifying the number of clusters and the number of agents within each cluster. \textcolor{black}{Our previous work, \cite{Sharf2022a},} applied tools from group theory to prescribe an algorithm for constructing an oriented graph such that the action of the automorphism group on the graph has orbits of prescribed sizes. \textcolor{black}{It also} provided upper and lower bounds on the number of edges needed to construct such graph. \textcolor{black}{This work extends the previous work \cite{Sharf2022a} in two ways. First, we further explore the bounds on the number of edges, by understanding the reason for the difference between the upper and the lower bounds (the example in Remark \ref{rem.TreeOrPath}) as well as the scaling properties of the upper bound (Theorem \ref{thm.GlobalUpperBound} and Remark \ref{rem.WorstCaseSparse}). Furthermore, we study the robustness of such graphs to agent malfunctions, and alter the synthesis procedure to guarantee the most extensive possible robustness of the clustering possible (Subsection \ref{subsec.Robust}).}

The rest of paper is organized as  follows. Section \ref{sec.background} reviews basic concepts related to network systems and group theory required to define a notion of symmetry for MAS. Section \ref{sec.cluster} presents the main results about cluster assignment, as well as a numerical study to demonstrate the theory. Finally, some concluding remarks are offered in Section \ref{sec.conclusion}.

\paragraph*{Notations}
This work employs basic notions from graph theory \cite{Godsil2001}.  An undirected graph $\mathcal{G}=(\mathbb{V},\mathbb{E})$ consists of finite sets of vertices $\mathbb{V}$ and edges $\mathbb{E} \subset \mathbb{V} \times \mathbb{V}$.  We denote the edge with ends $i,j\in \mathbb{V}$ as $e=\{i,j\}$. For each edge $e$, we pick an arbitrary orientation and denote $e=(i,j)$ when $i\in \mathbb{V}$ is the \emph{head} of edge and $j \in \mathbb{V}$ is its \emph{tail}. A path is a sequence of distinct nodes $v_1, v_2,\ldots , v_n$ such that $\{v_i, v_{i+1}\} \in \mathbb{E}$ for all $i$. A cycle is path $v_1,\ldots,v_n,v_1$. A simple cycle is a cycle whose vertices are all distinct. A graph is connected if there is a path between any two vertices, and a tree if it is connected but contains no simple cycles.
The incidence matrix of $\mathcal{G}$, denoted $\mathcal{E}\in\mathbb{R}^{|\mathbb{E}|\times|\mathbb{V}|}$, is defined such that for any edge $e=(i,j)\in \mathbb{E}$, $[\mathcal{E}]_{ie} =+1$, $[\mathcal{E}]_{je} =-1$, and $[\mathcal{E}]_{\ell e} =0$ for $\ell \neq i,j$. 
Moreover, the greatest common divisor of two positive integers $m,n$ is denoted by $\gcd(m,n)$, and their least common multiple is denoted by $\mathrm{lcm}(m,n)$. Note that $\mathrm{lcm}(m,n) = \frac{mn}{\gcd(m,n)}$ always holds. Two integers are relatively prime (or coprime) if there is no integer greater than one that divides them both. The cardinality of a finite set $A$ is denoted by $|A|$.

\vspace{-15pt}
\section{Symmetries in Networked Systems}\label{sec.background}
In this section, we provide background on the notion of symmetries for multi-agent systems originally proposed in~\cite{Sharf2019b}.

\vspace{-15pt}
\subsection{Diffusively Coupled Networks}
This section describes the structure of the MAS studied in \cite{Burger2014, Sharf2019b}. Consider a set of agents interacting over a network $\mathcal{G}=(\mathbb{V},\mathbb{E})$. Each node $i\in \mathbb{V}$ is assigned a dynamical system $\Sigma_i$, and the edges $e\in\mathbb{E}$ are assigned controllers $\Pi_e$, having the following form:
\begin{align} \label{Dynamics}
\Sigma_i: 
\begin{cases}
\dot{x}_i = f_i(x_i,u_i), ~
y_i = h_i(x_i,u_i),
\end{cases} \quad
\Pi_e: 
\begin{cases}
\dot{\eta}_e= \phi_e(\eta_e,\zeta_e), ~
\mu_e = \chi_e(\eta_e,\zeta_e)
\end{cases}.
\end{align}

\begin{figure}
        \centering
        \includegraphics[scale=0.7]{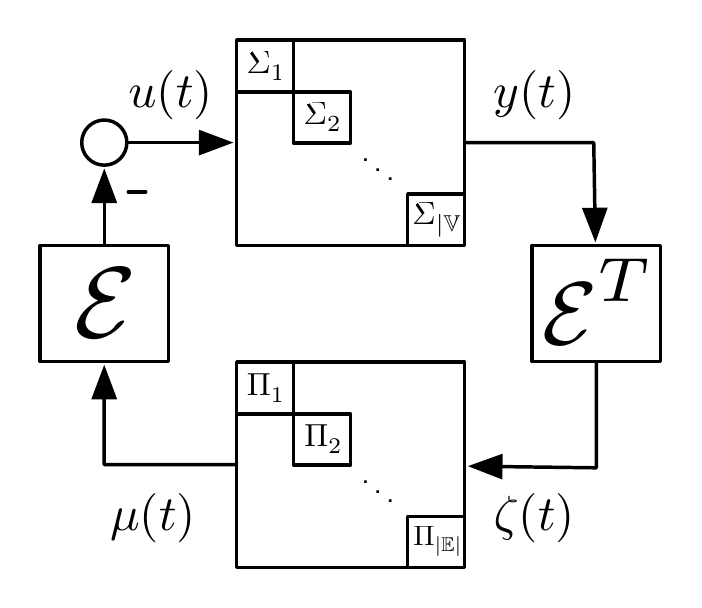}
        \vspace{-10pt}
        \caption{A diffusively coupled network.}
        \vspace{-10pt}
        \label{ClosedLoop}
\end{figure}

We consider the stacked vectors $u=[u_1^T,...,u_{|\mathbb{V}|}^T]^T$ and similarly for $y,\zeta$ and $\mu$.  
The MAS is diffusively coupled with the controller input described by $\zeta = \mathcal{E}^Ty$, and the control input is $u = -\mathcal{E}\mu$, where $\mathcal{E}$ is the incidence matrix of the graph $\mathcal{G}$.  This structure is denoted by the triplet $(\mathcal{G},\Sigma,\Pi)$, and is illustrated in Fig. \ref{ClosedLoop}.
In \cite{Burger2014} it was shown that 
the network converges to a steady-state satisfying the interconnection constraints if the agents and controllers are (output-strictly) maximum equilibrium independent passive (MEIP).  
The details of this definition and related definitions are not essential for the development of this work, and the interested reader is referred to \cite{Burger2014, Sharf2017} for more details. 
In this work, we focus on homogeneous networks, i.e., where all the agent dynamics and control dynamics are identical.
Moreover, we assume one of the following two alternatives (Assumption \ref{assump.1}).  If this is not the case, see \cite{Jain2018, Sharf2019a, Sharf2019c} for plant augmentation techniques. 

\begin{assump}\label{assump.1}
The agents $\Sigma_i$ are output-striclty MEIP and the controllers $\Pi_e$ are MEIP, or vice versa.
\end{assump}


A final technical definition is needed to characterize the steady-states of the network. Indeed, we implicitly assume that each agent and controller converges to a steady-state output given a constant input. This allows us to define a relation between constant inputs and constant outputs called the \emph{steady-state relation} of a system; see \cite{Burger2014}. We denote the steady-state relations of node $i$ and edge $e$ by $k_i$ and $\gamma_e$, respectively. For example, for an agent $i$, we say that $(\mathrm{u}_i,\mathrm{y}_i)$ is a steady-state input/output pair if $\mathrm{y}_i \in k_i(\mathrm{u})$.  We now introduce the notion of weak equivalence for dynamical systems.

\begin{defn}[\hspace{0.01pt}\cite{Sharf2019b}]
Two systems $\Sigma_1$ and $\Sigma_j$ are \emph{weakly equivalent} if their steady-state relations are identical.
\end{defn}
We refer the reader \cite{Sharf2019b} for a more thorough study and examples of weakly equivalent systems.

\vspace{-15pt}
\subsection{Group Theory, Graph Automorphisms, and Symmetric MAS}
Our approach for clustering will hinge on symmetry, which is modelled by the mathematical theory of groups \cite{Dummit2004}. The notion of a group can be defined in various ways, but we opt for the most concrete one.
\begin{defn}
Let $X$ be a set, and let $\mathbb{G}$ be a collection of invertible functions $X\to X$. The collection $\mathbb{G}$ is called a \emph{group} if for any $\mathbb{G} \ni f,g: X\to X$, both the composite function $f\circ g$ and the inverse function $f^{-1}$ belong to $\mathbb{G}$.
\end{defn}
Colloquially, the group $\mathbb{G}$ defines a collection of symmetries of the set $X$, and its action on $X$ allows us to identify certain elements of $X$ which are symmetric. 
Of interest in this work is the automorphism group of a (oriented) graph, which encodes structural symmetries of a graph. 
\begin{defn}
An automorphism of a \textcolor{black}{(directed or undirected)} graph $\mathcal{G} = (\mathbb{V},\mathbb{E})$ is a permutation $\psi:\mathbb{V}\to\mathbb{V}$ such that $i\in\mathbb{V}$ is connected to $j\in\mathbb{V}$ if and only if $\psi(i)$ is connected to $\psi(j)$. We denote the automorphism group of $\mathcal{G}$ by $\mathrm{Aut}(\mathcal{G})$. 
\end{defn}
We slightly abuse  notation and say that $\mathrm{Aut}(\mathcal{G})$ acts on $\mathcal{G}$ (rather than on $\mathbb{V}$).

\begin{defn}
Let $\mathbb{G}$ be a group of functions $X\to X$. We say $x,y\in X$ are \emph{exchangeable} (under the action of $\mathbb{G}$) if there is some $f\in \mathbb{G}$ such that $f(x)=y$. The \emph{orbit} of $x\in X$ is the set of elements which are exchangeable with $X$.
\end{defn}
Exchangeability was considered in \cite{Sharf2019b} to describe the clustering behavior of a MAS. Namely, the different clusters corresponded to the different orbits of the weak automorphism group of the MAS.   The following result ensures that we can consider orbits, and consequently different clusters in an MAS, as disjoint sets. 
\begin{prop}[\cite{Dummit2004}]
Let $\mathbb{G}$ be a group of functions $X\to X$. Then $X$ can be written as the union of disjoint orbits. \textcolor{black}{In particular, any element of $X$ belongs to exactly one orbit.}
\end{prop}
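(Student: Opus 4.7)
The plan is to show that exchangeability is an equivalence relation on $X$, whence the orbits are its equivalence classes and so partition $X$ automatically. Write $x \sim y$ if there exists $f \in \mathbb{G}$ with $f(x) = y$. The orbit of $x$ is then exactly the equivalence class $[x]$, and the proposition follows from the general fact that equivalence classes partition the underlying set.

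First I would verify reflexivity. The key observation is that although the definition of a group does not explicitly list the identity, closure under composition with the inverse forces $\mathrm{id}_X \in \mathbb{G}$: pick any $f\in\mathbb{G}$ (assuming $\mathbb{G}$ is nonempty, as is implicit in the setting), then $f^{-1} \in \mathbb{G}$, and therefore $f \circ f^{-1} = \mathrm{id}_X \in \mathbb{G}$. Since $\mathrm{id}_X(x) = x$, every $x \in X$ satisfies $x \sim x$.

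Next I would verify symmetry and transitivity. For symmetry, if $y = f(x)$ for some $f \in \mathbb{G}$, then $f^{-1} \in \mathbb{G}$ and $f^{-1}(y) = x$, giving $y \sim x$. For transitivity, if $y = f(x)$ and $z = g(y)$ with $f,g \in \mathbb{G}$, then $g \circ f \in \mathbb{G}$ and $(g \circ f)(x) = z$, giving $x \sim z$. Thus $\sim$ is an equivalence relation.

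Finally I would invoke the standard fact that the equivalence classes of an equivalence relation partition the set. Concretely, $X = \bigcup_{x\in X} [x]$ by reflexivity, and if $[x] \cap [y] \neq \emptyset$, choose $z$ in the intersection; then $x \sim z$ and $y \sim z$, so by symmetry and transitivity $x \sim y$, from which $[x] = [y]$. Identifying $[x]$ with the orbit of $x$ gives the decomposition of $X$ into disjoint orbits, and shows each element lies in exactly one orbit. There is no real obstacle here; the only subtle point is the need to extract the identity element from the given closure axioms, which is why I would highlight the $f \circ f^{-1} = \mathrm{id}$ step explicitly.
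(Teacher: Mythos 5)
Your proof is correct; the paper itself gives no proof of this proposition (it is cited to a reference), and your argument -- showing that exchangeability is an equivalence relation whose classes are precisely the orbits -- is the standard one found there. Your extra care in extracting $\mathrm{id}_X = f\circ f^{-1}$ from the paper's closure-only definition of a group (which omits the identity axiom) is exactly the right point to flag, since reflexivity is the only step that is not immediate from the stated axioms.
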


%
Finally, we combine the notions of graph automorphisms and diffusively-coupled MAS comprised of weakly equivalent agents.  

\begin{defn}[\cite{Sharf2019b, Sharf2022a}]\label{weakMASaut}
Let $(\mathcal{G},\Sigma,\Pi)$ be a diffusively coupled MAS. A \emph{weak automorphism of a MAS} is a map $\psi:\mathbb{V}\to\mathbb{V}$ with the following properties:
$1$) $\psi$ is an automorphism of the graph $\mathcal{G}$, and preserves edge orientations; 
$2$) for any $i\in\mathbb{V}$, $\Sigma_i$ and $\Sigma_{\psi(i)}$ are weakly equivalent; and
$3$) for any $e \in\mathbb{E}$, $\Pi_e$ and $\Pi_{\psi(e)}$ are weakly equivalent. 
We denote the collection of all weak automorphisms of $(\mathcal{G},\Sigma,\Pi)$ by $\mathrm{Aut}(\mathcal{G},\Sigma,\Pi)$. 
\end{defn}
Naturally, the weak automorphism of a MAS is a subgroup of the group of automorphisms $\mathrm{Aut}(\mathcal{G})$ of the graph $\mathcal{G}$.
\vspace{-20pt}
\section{Cluster Assignment in MAS}\label{sec.cluster}

We now consider the clustering problem for MAS. Specifically, we focus on the case where the agents are homogeneous, i.e., they have the exact same model, and restrict ourselves by requiring the edge controllers \eqref{Dynamics} are also homogeneous.
The paper \cite{Sharf2019b} established a link between the clustering behaviour of a MAS and certain symmetries it has, using Definition \ref{weakMASaut}. The main result from \cite{Sharf2019b} is summarized below.

\begin{thm}[\hspace{0.01pt}\cite{Sharf2019b}] \label{thm.Symmetry}
Take a diffusively-coupled MAS $(\mathcal{G},\Sigma,\Pi)$ where Assumption \ref{assump.1} holds. Then for any steady-state $\mathrm y=\begin{bmatrix} \mathrm y_1 &  \cdots & \mathrm y_{|\mathbb{V}|}\end{bmatrix}^T$ of the closed-loop and any weak automorphism $\psi\in\mathrm{Aut}(\mathcal{G},\Sigma,\Pi)$, we have $P_\psi \mathrm y = \mathrm y$, where $P_\psi$ is the permutation matrix representation  of $\psi$.
\end{thm}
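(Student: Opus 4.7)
The plan is to realize the steady-state output $\mathrm{y}$ as the unique minimizer of a convex program coming from the passivity potentials, and then to use the symmetry properties of $\psi$ to show that $P_\psi\mathrm{y}$ attains the same minimum value; uniqueness will then force $P_\psi\mathrm{y}=\mathrm{y}$.

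First, invoking the network-optimization framework of \cite{Burger2014}, under Assumption \ref{assump.1} each agent $\Sigma_i$ admits a convex integral potential $K_i$ encoding its steady-state relation $k_i$, and each controller $\Pi_e$ admits a convex integral potential $\Gamma_e$ encoding $\gamma_e$. The steady-state output $\mathrm{y}$ is characterized as the unique minimizer of a convex program of the form
\[
\min_{y}\; \sum_{i\in\mathbb{V}} K_i(y_i) \;+\; \sum_{e\in\mathbb{E}} \Gamma_e\!\bigl((\mathcal{E}^T y)_e\bigr),
\]
where the output-strict MEIP property on one side of Assumption \ref{assump.1} delivers strict convexity of the corresponding summand, hence uniqueness of $\mathrm{y}$.

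Second, I would show that $P_\psi\mathrm{y}$ is also a minimizer of the same program. Since $\psi$ is a graph automorphism preserving edge orientation (Definition \ref{weakMASaut}, condition~1), the bijection it induces on $\mathbb{E}$ produces an \emph{unsigned} permutation matrix $Q_\psi$ with $\mathcal{E}^T P_\psi = Q_\psi \mathcal{E}^T$. Substituting $y = P_\psi\mathrm{y}$ into the objective and reindexing via $\psi^{-1}$ transforms it to
\[
\sum_{i\in\mathbb{V}} K_{\psi(i)}(\mathrm{y}_i) \;+\; \sum_{e\in\mathbb{E}} \Gamma_{\psi(e)}\!\bigl((\mathcal{E}^T \mathrm{y})_e\bigr).
\]
The weak-equivalence conditions (2) and (3) of Definition \ref{weakMASaut} imply $k_{\psi(i)}=k_i$ and $\gamma_{\psi(e)}=\gamma_e$, and hence $K_{\psi(i)}=K_i$ and $\Gamma_{\psi(e)}=\Gamma_e$ (weakly equivalent systems have the same integral potential up to an inessential additive constant). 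Thus the objective value at $P_\psi\mathrm{y}$ coincides with that at $\mathrm{y}$, so $P_\psi\mathrm{y}$ is also a minimizer. Uniqueness of the minimizer then yields $P_\psi\mathrm{y}=\mathrm{y}$.

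The main obstacle I anticipate is the edge-bookkeeping in the second step: one must verify carefully that orientation-preservation of $\psi$ genuinely produces an unsigned $Q_\psi$, as opposed to a signed permutation. Any sign reversal on an edge would alter the interconnection constraint encoded by $\mathcal{E}^T$ and break the invariance argument. Once this combinatorial point is pinned down, the remainder is a routine application of the convex-analytic steady-state characterization of \cite{Burger2014} combined with the definitions of weak equivalence and weak MAS automorphism.
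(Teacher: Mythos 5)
The paper itself gives no proof of this theorem --- it is imported verbatim from \cite{Sharf2019b} --- so the comparison is against that reference. Your proposal reproduces its argument essentially exactly: steady-state outputs are characterized as minimizers of the convex optimal potential problem of \cite{Burger2014}; the objective is invariant under the vertex and edge permutations induced by $\psi$ because weak equivalence means identical steady-state relations, hence identical integral potentials up to additive constants; and the uniqueness of the minimizer forces $P_\psi \mathrm{y}=\mathrm{y}$. The ``obstacle'' you flag about signs is already resolved by condition (1) of Definition \ref{weakMASaut}: orientation-preservation is imposed precisely so that the induced edge permutation is unsigned and $\mathcal{E}^T P_\psi = Q_\psi \mathcal{E}^T$ holds with $Q_\psi$ a genuine permutation matrix.

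The one step to tighten is uniqueness under the second branch of Assumption \ref{assump.1} (controllers output-strictly MEIP, agents only MEIP): there the strictly convex summands are the $\Gamma_e$, which makes $\zeta=\mathcal{E}^T\mathrm{y}$ unique among minimizers but not $\mathrm{y}$ itself, and a symmetric convex program with a non-singleton solution set need not have every minimizer fixed by the symmetry --- yet the theorem is asserted for \emph{any} steady-state. The gap closes in one line: any two minimizers differ by an element of $\ker\mathcal{E}^T=\operatorname{span}(\mathbf{1})$ on a weakly connected graph, so $P_\psi\mathrm{y}-\mathrm{y}=c\mathbf{1}$ for some scalar $c$; since a permutation preserves the sum of the entries, $c=0$. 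With that addendum your argument is complete and coincides with the proof in \cite{Sharf2019b}.
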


This result can in fact be used to show that the network converges to a clustering configuration, where the clusters are given by the orbits of the weak automorphism group. Namely, one considers diffusively-coupled MAS $(\mathcal{G},\Sigma,\Pi)$ satisfying Assumption \ref{assump.1}, for which the closed-loop is known to converge, and the invariance properties of the steady-state limit are studied. Focusing on homogeneous networks, \cite{Sharf2019b} identified the value of $\gamma_e(0)$, the steady-state relation for the controller on the $e$th edge,  as indicative of clustering. Namely, it shows that if $0\in \gamma_e(0)$ for all $e\in \mathbb{E}$, then the MAS $(\mathcal{G},\Sigma,\Pi)$ converges to consensus, and otherwise it displays a clustering behavior. Namely, for homogeneous MAS, two nodes are in the same cluster if they are exchangable under the action of $\mathrm{Aut}(\mathcal{G})$, and the converse is almost surely true.

Although \cite{Sharf2019b} presented a strong link between symmetry and clustering in MAS, it did not consider a synthesis procedure for solving the clustering problem:
\begin{prob} \label{prob.cluster}
Consider a collection of $n$ homogeneous agents $\{\Sigma_i\}_{i\in \mathbb{V}}$, and let $r_1,\ldots,r_k$ be positive integers which sum to $n$. Find a directed graph $\mathcal{G} = (\mathbb{V},\mathbb{E})$ and homogeneous edge controllers $\{\Pi_e\}_{e\in \mathbb{E}}$ such that the closed loop MAS converges to a clustered steady-state, with a total of $k$ clusters of sizes $r_1,\ldots,r_k$.
\end{prob}

The goal of this section is use the tools of \cite{Sharf2019b} to solve Problem \ref{prob.cluster}. As described above, this can be achieved in two steps. We first make the following assumption about the controllers:
\begin{assump}\label{assump.3}
The homogeneous MEIP controllers are chosen so that $0\not\in\gamma_e(0)$ for any edge $e\in \mathbb{E}$.
\end{assump}
Second, given the desired cluster sizes $r_1,\ldots,r_k$, we seek an oriented graph $\mathcal{G} = (\mathbb{V},\mathbb{E})$ such that the action of $\mathrm{Aut}(\mathcal{G})$ on $\mathcal{G}$ has orbits of sizes $r_1,\ldots,r_k$. Moreover, we require $\mathcal{G}$ to be weakly connected\footnote{Recall that a directed graph is weakly connected if its unoriented counterpart is connected.} to assure a flow of information throughout the corresponding diffusively-coupled network. If we find such a graph and Assumption \ref{assump.3} holds, the results of \cite{Sharf2019b} guarantee that the desired clustering behavior is achieved almost surely. We make the following definition for the sake of brevity, and define the corresponding problem:
\begin{defn}
The oriented graph $\mathcal{G}$ is said to be of type OS$(r_1,\ldots,r_k)$ if it is weakly connected and the action of $\mathrm{Aut}(\mathcal{G})$ on $\mathcal{G}$ has orbits of sizes $r_1,\ldots,r_k$.\footnote{OS stands for "orbit structure."}
\end{defn}
\begin{prob}
Given positive integers $r_1,\ldots,r_k$, determine if an oriented graph of type OS$(r_1,\ldots,r_k)$ exists, and if so, construct it.
\end{prob}

\textcolor{black}{Before moving on to the solving this problem, we present a tool we apply later in the proofs, called the graph quotient.}
\textcolor{black}{
\begin{defn}
Let $\mathcal{G} = (\mathbb{V},\mathbb{E})$ be a (directed or undirected) graph, and let $V_1,V_2,\ldots, V_k$ be a partition of $\mathbb{V}$ to disjoint sets. The \emph{quotient of $\mathcal{G}$}, according to the partition $V_1,V_2,\ldots, V_k$, is a graph $\mathcal{C}$ with the following properties:
\begin{itemize}
    \item[i)] The nodes of $\mathcal{C}$ are denoted by $1,2,\ldots,k$.
    \item[ii)] For any $l_1,l_2 \in \{1,2,\ldots,k\}$, There is an edge $l_1\to l_2$ in the quotient graph $\mathcal{C}$ if and only if there is at least one edge between elements of $V_{l_1}$ and $V_{l_2}$.
\end{itemize}
\end{defn}
In other words, the quotient graph is achieved by grouping the nodes of $\mathcal{G}$ by the sets $V_1,V_2,\ldots, V_k$, removing edges within the same set, and identifying all edges going between the same two groups $V_i$ and $V_j$. An illustration can be seen in Fig. \ref{fig.Condensation}. It is easy to see that if $\mathcal{G}$ is connected, then so is its quotient, and this fact will play a vital role later.
}

\begin{figure}[t]
    \centering
    \includegraphics[width = 0.7\textwidth]{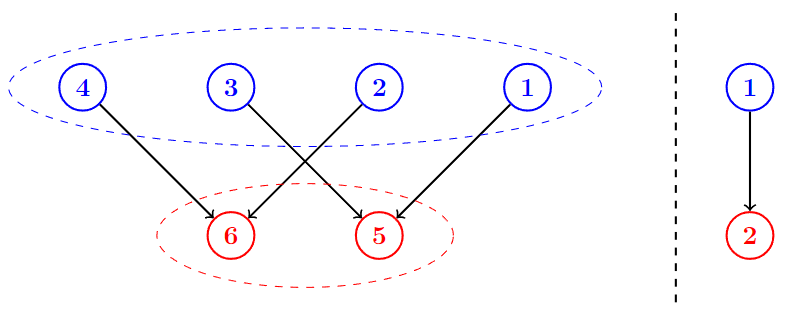}
    \vspace{-15pt}
    \caption{An example of graph quotient. The original graph $\mathcal{G}$ is depicted on left, and the sets $V_1,V_2$ of the partition are marked in blue and red, respectively. The corresponding quotient graph can be seen on the right.}
    \vspace{-15pt}
    \label{fig.Condensation}
\end{figure}

\vspace{-15pt}
\subsection{Construction and Sparsity Bounds on OS-type Graphs}
In this subsection, we exhibit a construction for OS-type graphs, as well as bounds on the sparsity of such graphs. Running the system requires means to implement the corresponding interconnections, ergo graphs with fewer edges are desirable. The following theorem provides a lower bound on the number of edges required to construct OS-type graphs.

\begin{thm}\label{thm.LowerBound}
Let $r_1,\ldots r_k$ be any positive integers. Any directed graph of type OS$(r_1,\ldots,r_k)$ has at least $m$ edges, where 
\begin{align} \label{eq.LowerBound}
m = \min_{\mathcal{T}\text{\emph{ tree on $k$ vertices}}} \sum_{\{i,j\}\in\mathcal{T}} \mathrm{lcm}(r_i,r_j).
\end{align}
\end{thm}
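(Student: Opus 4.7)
The plan is to exploit the orbit structure together with the weak connectivity of $\mathcal{G}$ to reduce the counting question on $\mathcal{G}$ to an edge-counting question on its quotient graph. Let $V_1,\ldots,V_k$ be the orbits of $\mathrm{Aut}(\mathcal{G})$, with $|V_i|=r_i$, and let $\mathcal{C}$ be the quotient of $\mathcal{G}$ under this partition. Since $\mathcal{G}$ is weakly connected, so is $\mathcal{C}$, and therefore the underlying undirected graph of $\mathcal{C}$ admits a spanning tree $\mathcal{T}$, which is in particular a tree on $k$ vertices. First I would show, for every edge $\{i,j\}$ of $\mathcal{T}$, that the number of directed edges of $\mathcal{G}$ between $V_i$ and $V_j$ is at least $\mathrm{lcm}(r_i,r_j)$. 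Summing over tree edges, which correspond to disjoint sets of $\mathcal{G}$-edges since each edge of $\mathcal{G}$ lies between a unique unordered pair of orbits, then gives $|\mathbb{E}|\geq\sum_{\{i,j\}\in\mathcal{T}}\mathrm{lcm}(r_i,r_j)\geq m$ by the definition of the minimum $m$.

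The key lemma is an orbit-size divisibility fact: whenever there is at least one edge of $\mathcal{G}$ between $V_i$ and $V_j$, the total number of such edges is a common multiple of $r_i$ and $r_j$, hence at least $\mathrm{lcm}(r_i,r_j)$. To see this, fix $i\neq j$ and consider the set $E_{i\to j}$ of directed edges of $\mathcal{G}$ with tail in $V_i$ and head in $V_j$. This set is invariant under $\mathrm{Aut}(\mathcal{G})$, because automorphisms permute orbits setwise and preserve edge orientations. Since $\mathrm{Aut}(\mathcal{G})$ acts transitively on $V_i$, all nodes of $V_i$ share a common out-degree into $V_j$, call it $d_i$, so $|E_{i\to j}|=r_i d_i$; the symmetric argument on $V_j$ gives $|E_{i\to j}|=r_j d_j$ for a common in-degree $d_j$. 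Thus $|E_{i\to j}|$ is a common multiple of both $r_i$ and $r_j$, and the same analysis applies to $E_{j\to i}$. Because $\{i,j\}$ is an edge of the quotient $\mathcal{C}$, at least one of $E_{i\to j}$, $E_{j\to i}$ is nonempty, so the corresponding count is a positive common multiple of $r_i$ and $r_j$, and the claim follows.

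The final step is purely combinatorial: $\mathcal{T}$ has $k-1$ edges, each edge $\{i,j\}$ contributes a disjoint block of $\mathcal{G}$-edges of cardinality at least $\mathrm{lcm}(r_i,r_j)$, and $\mathcal{T}$ is one particular tree on $k$ vertices, so its weighted total is no smaller than the minimum $m$ taken over all such trees. The main obstacle, in my view, is executing the divisibility step cleanly: one has to keep careful track that the invariance $\psi(V_j)=V_j$ uses that $V_j$ is an orbit of the \emph{full} automorphism group, and one must reason separately about $E_{i\to j}$ and $E_{j\to i}$ in the directed setting rather than treating the edges between $V_i$ and $V_j$ as an undirected bipartite graph, since automorphisms of an oriented graph respect orientations. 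Everything else, including the reduction to a spanning tree and the passage from a specific $\mathcal{T}$ to the minimum over all trees on $k$ vertices, is routine.
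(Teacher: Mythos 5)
Your proposal is correct and follows essentially the same route as the paper's proof: the orbit-transitivity argument showing that the edges between two connected orbits number a common multiple of $r_i$ and $r_j$ (hence at least $\mathrm{lcm}(r_i,r_j)$), followed by passing to the connected quotient graph and summing over a spanning tree. Your handling of the directed edge sets $E_{i\to j}$ and $E_{j\to i}$ separately is a minor refinement of the paper's undirected bipartite count, but the underlying argument is the same.
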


\begin{proof}
Let  $\mathcal{G}$ be a graph of type OS($r_1,\ldots,r_k$), and $V_1,\ldots,V_k$ be the orbits of $\mathrm{Aut}(\mathcal{G})$ in $\mathcal{G}$, corresponding to the different clusters. \textcolor{black}{The proof will consist of two steps. First, we show that if there exists at least one edge between $V_i$ and $V_j$, then there are at least $\mathrm{lcm}(r_i,r_j)$ edges between $V_i$ and $V_j$. This will follow from the fact that the automorphism group of $\mathcal{G}$ can map any two nodes in each $V_i$ to one another. Second, we consider the quotient graph by the partition $V_1,\ldots,V_k$, which must be connected as $\mathcal{G}$ is connected. Hence, it must have a spanning tree, which can be used to determine which pairs $V_i,V_j$ have edges between them. These two facts together will allow us to establish the lower bound \eqref{eq.LowerBound}.}

We start with the former claim. Suppose that there is an edge between elements of $V_i,V_j$ for some $i,j\in\{1,\ldots,k\}$. Let $\mathcal{G}_{ij}$ be the induced bi-partite subgraph on the vertices $V_i\cup V_j$, i.e., only edgess between $V_i$ and $V_j$ appear in $\mathcal{G}_{ij}$. We recall that $V_j$ is invariant under $\mathrm{Aut}(\mathcal{G})$, meaning that for any node $v \in V_i$ and any automorphism $\psi \in \mathrm{Aut}(\mathcal{G})$, $x$ and $\psi(x)$ are linked to the same number of nodes in $V_j$. As all the nodes in $V_i$ can be mapped to one another using graph automorphisms, we conclude that they all have the same $\mathcal{G}_{ij}$-degree, denoted $d_i$. By reversing the roles of $V_i$ and $V_j$, we conclude that all nodes in $V_j$ have the same $\mathcal{G}_{ij}$ degree as well, which will be denoted by $d_j$. As each edge in $\mathcal{G}_{ij}$ touches one node from $V_i$ and one node from $V_j$, we conclude that the total number of edges in $\mathcal{G}_{ij}$ is equal to $d_ir_i = d_jr_j$, as the sets $V_i,V_j$ have $r_i,r_j$ nodes, respectively.  We note that by assumption, there is at least one edge between $V_i$ and $V_j$, hence $d_ir_i = d_jr_j \ge 1$. In particular, all the numbers $d_i,r_i,d_j$ and $r_j$ are positive integers.

The rest of the proof of this part of the claim follows from basic number theory and some algebra - the equation $d_ir_i = d_jr_j$, together with the fact that $d_i,d_j,r_i$ and $r_j$ are all positive integers, imply that $r_j$ divides the product $r_i d_i$, and hence the integer $\frac{r_j}{\gcd(r_i,r_j)}$ must divide $\frac{r_i}{\gcd(r_i,r_j)} d_i$. However, by definition of the greatest common divisor, $\frac{r_j}{\gcd(r_i,r_j)}$ and $\frac{r_i}{\gcd(r_i,r_j)}$ are disjoint ,meaning that $\frac{r_j}{\gcd(r_i,r_j)}$ must divide $d_i$. In particular, as $d_i$ is a positive integer, we conclude that $d_i \ge \frac{r_j}{\gcd(r_i,r_j)}$. Thus, the number of edges in $\mathcal{G}_{ij}$ is at least $r_i d_i \ge \frac{r_i r_j}{\gcd(r_i,r_j)} = \mathrm{lcm}(r_i,r_j)$, as claimed.

Now, we move to the second part of the proof. We consider the quotient $\mathcal{C}$ of $\mathcal{G}$ by the partition $V_1,\ldots,V_k$. As $\mathcal{G}$ is (weakly) connected, so is the quotient $\mathcal{C}$. In particular, there exists a spanning tree $\mathcal{T}$ for $\mathcal{C}$. By the definition of the quotient, for any two nodes $i,j$ connected in $\mathcal{T}$ (and hence in $\mathcal{C}$), there is at least one edge between elements of $V_{i}$ and $V_{j}$ in $\mathcal{G}$. However, by the first part of the proof, we conclude that there are at least $\mathrm{lcm}(r_i,r_j)$ edges between them. By summing over all connected pairs of nodes in $\mathcal{T}$, we conclude that the graph $\mathcal{G}$ has at least $ \sum_{\{i,j\}\in\mathcal{T}} \mathrm{lcm}(r_i,r_j) \ge m$ edges. This concludes the proof.
\end{proof}

Beside giving a lower bound on the number of edges in an OS-type graph, Theorem \ref{thm.LowerBound} also highlights the role of the quotient base graph $\mathcal{T}$ in the construction of such graphs. Namely $\mathcal{T}$, which was taken as a tree, determines which clusters are connected in $\mathcal{G}$. The following algorithm, Algorithm \ref{alg.BuildingGraphs}, uses this idea to construct OS-type graphs when $\mathcal{T}$ is taken as a path graph.

\begin{algorithm} [!h]
\caption{Building sparse OS-type graphs}
\label{alg.BuildingGraphs}
{\bf Input:} A collection $r_1,\ldots,r_k$ of positive integers summing to $n$, and a path $\mathcal{T}$ on $k$ vertices.\\
{\bf Output:} A graph $\mathcal{G}$ of type OS$(r_1,\ldots,r_k)$.
\begin{algorithmic}[1]
\State Let $\mathcal{G} = (\mathbb{V},\mathbb{E})$ be an empty graph.
\State For any $j=1,\ldots,k$ and $p=1,\ldots,r_j$, add a node with label $v^j_p$ to $\mathbb{V}$.
\State For any edge $\{i,j\}$ in $\mathcal{T}$ and any $p=1,\ldots,\mathrm{lcm}(r_i,r_j)$, add the edge $v^i_{p ~\mathrm{mod}~ r_i} \to v^j_{p ~\mathrm{mod}~ r_j}$ to $\mathbb{E}$
\State Compute $i^\star = \arg\min\{r_i\}$. If $r_{i^\star} = 1$, go to step 6.
\State For any $p=1,\ldots,r_{i^\star}$, add the edge $v_p^{i^\star} \to v_{(p+1) ~\mathrm{mod}~ r_{i^\star}}^{i^\star}$ to $\mathbb{E}$.
\State {\bf Return} $\mathcal{G} = (\mathbb{V},\mathbb{E})$.
\end{algorithmic}
\end{algorithm}

\textcolor{black}{Algorithm \ref{alg.BuildingGraphs} essentially tries to reverse the quotient process described in the proof of Theorem \ref{thm.LowerBound}. It starts with the quotient graph, given by the tree $\mathcal{T}$, and it constructs the original graph $\mathcal{G}$ be assigning nodes to each element of the partition. More precisely, the algorithm assigns to each node $j$ in the tree a set of $r_j$ nodes in $\mathcal{G}$, denoted by $V_j = \{v_p^j\}_{p=1}^{r_j}$. In step 3, the algorithm populates the graph $\mathcal{G}$ with edges corresponding to those found in the quotient $\mathcal{T}$, and it does so with the minimal possible amount of edges guaranteeing symmetry (as seen in the proof of Theorem \ref{thm.LowerBound}). An illustration of this step can be seen in Fig. \ref{fig.Step3}. In step 5, the algorithm adds a few more edges to guarantee the constructed graph $\mathcal{G}$ is connected, but in a way that does not effect the quotient process, as all new edges are between nodes in the same set $V_{i^\star}$ of the partition.} As the following theorem shows, choosing any path $\mathcal{T}$ will result in a graph of type OS$(r_1,\ldots,r_k)$. However, the number of edges in the graph depends on the path $\mathcal{T}$. We note that $\mathcal{T}$ must be a path rather than a general tree, see Remark \ref{rem.TreeOrPath} for further discussion.

\begin{figure}[b]
    \centering
    \vspace{-15pt}
    \includegraphics[width = 0.7\textwidth]{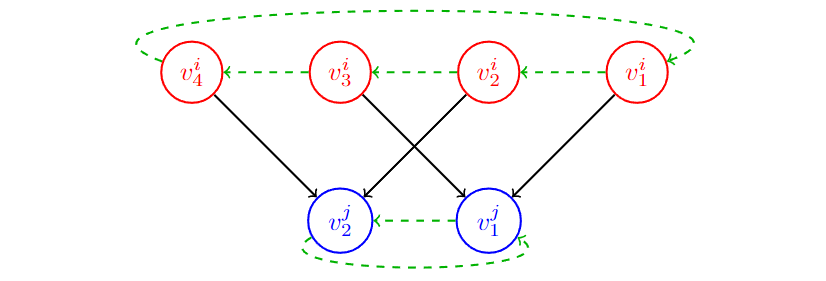}
    \vspace{-10pt}
    \caption{An illustration of step 3 in Algorithm \ref{alg.BuildingGraphs} with $r_i = 4$ (in red) and $r_j = 2$ (in blue). The algorithm starts with the edge between $v_1^i$ and $v_1^j$. It then moves along the green dashed lines, adding an edge after each one step. This step of the algorithm terminates when the algorithm tries to add an already existing edge, resulting in the black edges depicted in the figure.}
    \label{fig.Step3}
\end{figure}

\begin{thm}\label{thm.UpperBound}
Let $r_1,\ldots r_k$ be positive integers summing to $n$. For any path $\mathcal{T}$, Algorithm \ref{alg.BuildingGraphs} outputs a graph $\mathcal{G}$ of type OS$(r_1,\ldots,r_k)$ having $\sum_{\{i,j\}\in\mathcal{T}} \mathrm{lcm}(r_i,r_j) + \min_i r_i$ edges. Thus, there is a graph of type OS$(r_1,\ldots,r_k)$ having $M$ edges, where
\begin{align} \label{eq.UpperBound}
M = \min_{\mathcal{T}\text{\emph{ path on $k$ vertices}}} \sum_{\{i,j\}\in\mathcal{T}} \mathrm{lcm}(r_i,r_j) + \min_{i} r_i.
\end{align}
\end{thm}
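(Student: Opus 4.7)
The plan is to verify three separate properties of the algorithm's output: the edge count, the orbit structure of $\mathrm{Aut}(\mathcal{G})$, and weak connectivity. First I would settle the edge count, which is essentially combinatorial. Step 3 contributes exactly $\mathrm{lcm}(r_i,r_j)$ distinct edges for each $\{i,j\}\in\mathcal{T}$: two parameter values $p_1,p_2\in\{1,\ldots,\mathrm{lcm}(r_i,r_j)\}$ produce the same edge iff $p_1\equiv p_2\pmod{r_i}$ and $p_1\equiv p_2\pmod{r_j}$, which by the definition of the least common multiple forces $p_1=p_2$. Edges generated by distinct edges of $\mathcal{T}$ are trivially distinct, and the Step 5 edges are internal to $V_{i^\star}$ and hence disjoint from the inter-cluster Step 3 edges. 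Summing yields the claimed $\sum_{\{i,j\}\in\mathcal{T}}\mathrm{lcm}(r_i,r_j)+\min_i r_i$ edges (in the corner case $\min_i r_i=1$ the algorithm actually produces one fewer edge, which only strengthens the upper bound).

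Next I would show that the orbits of $\mathrm{Aut}(\mathcal{G})$ are exactly $V_1,\ldots,V_k$. The easy direction is to exhibit enough automorphisms: I would define the master cyclic shift $\sigma:\mathbb{V}\to\mathbb{V}$ by $\sigma(v^i_a)=v^i_{(a+1)\bmod r_i}$ for every $i,a$, and check $\sigma\in\mathrm{Aut}(\mathcal{G})$ by edge preservation. A Step 3 edge $v^i_{p\bmod r_i}\to v^j_{p\bmod r_j}$ is mapped to $v^i_{(p+1)\bmod r_i}\to v^j_{(p+1)\bmod r_j}$, which is the Step 3 edge corresponding to parameter $p+1$ (wrapping modulo $\mathrm{lcm}(r_i,r_j)$); Step 5 edges are likewise permuted among themselves. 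Since the orbit of $v^i_a$ under $\langle\sigma\rangle$ is all of $V_i$, every $\mathrm{Aut}(\mathcal{G})$-orbit must be a union of some of the $V_i$'s.

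The harder direction, which I expect to be the main obstacle, is showing that no two $V_i$'s merge into a common orbit. My plan here is: first, $V_{i^\star}$ is singled out as the unique cluster carrying intra-cluster edges (the directed cycle added in Step 5), so any automorphism must preserve $V_{i^\star}$ setwise; when $r_{i^\star}=1$ the singleton $V_{i^\star}$ is still distinguished either by size or by the asymmetric incidence pattern of its Step 3 edges. Second, moving outward along $\mathcal{T}$ starting from $V_{i^\star}$, each subsequent cluster $V_j$ is uniquely characterized by its distance to $V_{i^\star}$ in the quotient graph together with the in/out orientation of the Step 3 edges along the path segment between them; since $\mathcal{T}$ is a path, its quotient has essentially no nontrivial automorphisms (the only candidate, a reflection, is ruled out by the asymmetry of Step 5 on $i^\star$ and by the edge orientations in Step 3). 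A careful case analysis handles coincidences $r_i = r_j$ for $i\neq j$. Finally, weak connectivity follows because Step 5 makes $V_{i^\star}$ an internally connected directed cycle, and within each bipartite cross-graph $\mathcal{G}_{ij}$ with $\{i,j\}\in\mathcal{T}$ every node of $V_i\cup V_j$ is incident to at least one edge, so connectivity propagates from $V_{i^\star}$ along $\mathcal{T}$ to all of $\mathbb{V}$. Minimizing the resulting edge count over all paths $\mathcal{T}$ on $k$ vertices yields the bound~\eqref{eq.UpperBound}.
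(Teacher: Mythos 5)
Most of your proposal lines up with the paper's argument: the edge count (which you actually justify more carefully than the paper does, via the Chinese-remainder-style observation that $p_1\equiv p_2$ modulo both $r_i$ and $r_j$ forces $p_1=p_2$ within one period of $\mathrm{lcm}(r_i,r_j)$), the exchangeability of nodes within each $V_i$ via the cyclic shift $\sigma(v^i_a)=v^i_{(a+1)\bmod r_i}$, and the connectivity argument (step-5 cycle makes $V_{i^\star}$ connected, step-3 edges propagate connectivity along $\mathcal{T}$) are all essentially identical to what the paper writes.

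The one component where you diverge is the hard direction — that each $V_j$ is invariant under $\mathrm{Aut}(\mathcal{G})$ — and there your sketch has a genuine circularity. You characterize $V_{i^\star}$ as ``the unique cluster carrying intra-cluster edges'' and then reason about distances and reflections ``in the quotient graph,'' but both notions presuppose that an arbitrary automorphism permutes the sets $V_1,\ldots,V_k$ among themselves, which is precisely what is to be proved. The argument can be repaired at the vertex level ($V_{i^\star}$ is the set of vertices lying on a directed cycle when $r_{i^\star}\ge 2$, after which ``distance to $V_{i^\star}$'' becomes an automorphism-invariant function of vertices, and edge orientations separate $V_{i^\star-d}$ from $V_{i^\star+d}$), but that repair, plus the $r_{i^\star}=1$ corner case, is exactly the ``careful case analysis'' you defer rather than carry out. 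The paper's route is shorter and avoids the issue entirely: taking $\mathcal{T}=1\to 2\to\cdots\to k$, every edge either points forward along the path or stays inside $V_{i^\star}$, so $V_k$ is exactly the set of vertices of out-degree zero and is therefore invariant; then $V_{k-1}$ is exactly the set of vertices with an edge into $V_k$, hence invariant; iterating down the path gives invariance of every $V_j$ with no quotient-level reasoning needed. I would either adopt that out-degree induction or fully write out the vertex-level version of your distance argument before considering the invariance step proved.
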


\begin{proof}
We assume, without loss of generality and for the benefit of neater notation, that the path $\mathcal{T}$ is of the form $1 \to 2 \to \ldots \to k$, and let $i^\star$ be the vertex chosen in step 5, i.e. the node at which $r_i$ is minimized. By construction, the number of edges in $\mathcal{G}$ is equal to $\sum_{\{i,j\}\in\mathcal{T}} \mathrm{lcm}(r_i,r_j)$, plus $r_{i^\star}$ if $r_{i^\star} \ge 2$. Thus, it suffices to show the orbits of the action of $\mathrm{Aut}(\mathcal{G})$ on $\mathcal{G}$ are given by $V_1,\ldots, V_k$, where $V_j = \{v_j^p\}_{p=1}^{r_j}$, and that $\mathcal{G}$ is weakly connected. We start with the former.

Regarding the orbits, we must show that all nodes in $V_j$ are exchangeable, and that $V_j$ are invariant under $\mathrm{Aut}(\mathcal{G})$, for any $j=1,\ldots,k$. For the first claim, we consider the map $\psi:\mathbb{V}\to\mathbb{V}$ defined by sending each vertex $v^j_p$ to $v^j_{(p+1)~\mathrm{mod}~ r_j}$, illustrated by the dashed green edges in Fig. \ref{fig.Step3}. By construction, this map is an automorphism of $\mathcal{G}$. Moreover, iterating $\psi$ enough times would move $v^j_p$ to any vertex in $V_j$, hence any two nodes in $V_j$ are exchangable.

Second, we show that each $V_j$ is invariant under $\mathrm{Aut}(\mathcal{G})$, implying that  the orbits of $\mathrm{Aut}(\mathcal{G})$ in $\mathcal{G}$ are exactly $V_1,\ldots V_k$. This is obvious if $k=1$, as $V_1 = \mathbb{V}$. If $k\ge 2$, then either $i^\star\neq 1$ or $i^\star\neq k$ (or both). We assume $i^\star\neq k$ without loss of generality. Graph automorphisms preserve all graph properties, and in particular, they preserve the out-degree of vertices. As all edges are oriented from $V_j$ to $V_{j+1}$ or from $V_i^\star$ to itself. Therefore, the vertices in $V_k$ have an out-degree of $0$, and they are the only ones with this property. Thus $V_k$ must be invariant under $\mathrm{Aut}(\mathcal{G})$. The only vertices with edges to $V_k$ are in $V_{k-1}$, so $V_{k-1}$ is also invariant under $\mathrm{Aut}(\mathcal{G})$. Iterating this argument, we conclude that $V_1,\cdots V_k$ must all be invariant under the action of $\mathrm{Aut}(\mathcal{G})$, as claimed.

Now, we show that $\mathcal{G}$ is weakly connected. First, note the induced subgraph on $V_{i^\star}$ is weakly connected, due to the construction made in step 5. Indeed, this is clear if $r_{i^\star}=1$, and in the case $r_{i^\star} \ge 2$, the cycle $v^{i^\star}_1 \to v^{i^\star}_2 \to v^{i^\star}_3 \to \cdots$ eventually passes through all the nodes in $V_{i^\star}$. However, by the construction in step 3 of the algorithm, any two nodes $v^i_p$ and $v^j_p$ can be connected by a path. Thus any two arbitrary vertices $v^{j_1}_{p_1}$ and $v^{j_2}_{p_2}$ can be linked as follows - first, go from $v^{j_1}_{p_1}$ to $v^{i}_{p_1~\mathrm{mod}~ r_i}$ using the edges added in step 3; then, move to $v^{i}_{p_2~\mathrm{mod}~ r_{i}}$ using the edges added in step 5; lastly, continue from $v^{i}_{p_2~\mathrm{mod}~ r_{i}}$ to $v^{j_2}_{p_2}$ using the edges added to the graph in step 3. Hence, $\mathcal{G}$ is weakly connected, and the proof is complete.
\end{proof}

\begin{rem} \label{rem.TreeOrPath}
The lower bound considers all possible trees, but the upper bounds only considers path graphs. It can be seen in the proof of Theorem \ref{thm.UpperBound} that the fact that $\mathcal{T}$ is a path is only used to prove that each $V_j$ is invariant under the action of $\mathrm{Aut}(\mathcal{G})$. This might be false if $\mathcal{T}$ is any tree, as the following example shows.  Consider Algorithm \ref{alg.BuildingGraphs} with 4 clusters of size $1$ and a tree $\mathcal{T}$ as depicted in Fig. \ref{fig,Tree}. In this case, the graph $\mathcal{G}$ is equal to the tree $\mathcal{T}$. However, the permutation switching the nodes $2$ and $3$ is a graph automorphism, so there is a cluster of size at least $2$, hence $\mathcal{G}$ is not OS$(1,1,1,1)$. Nevertheless, one should notice that the upper and lower bounds coincide when the number of clusters $k$ is at most $3$, as any tree on at most $3$ nodes must be a path.
\end{rem}

\begin{figure}[b]
        \centering
        \vspace{-10pt}
        \includegraphics[width=0.4\textwidth]{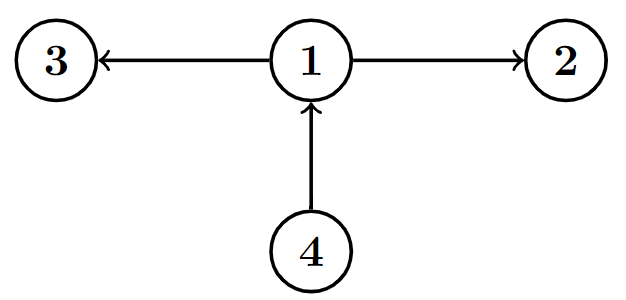} 
        \vspace{-10pt}
        \caption{The tree graph discussed in Remark \ref{rem.TreeOrPath}.}
        \label{fig,Tree}
\end{figure}

\begin{rem}
The lower bound \eqref{eq.LowerBound} can be found using Kruskal's algorithm, which finds a minimal spanning tree in polynomial time \cite{Cormen2009}. However, the upper bound \eqref{eq.UpperBound} requires one to solve a variant of the traveling salesman problem, which is {\bf NP}-hard \cite{Cormen2009}. 
\end{rem}


Algorithm \ref{alg.BuildingGraphs} solves the general cluster assignment problem, as it constructs graphs of type OS$(r_1,\ldots,r_k)$ for any cluster sizes $r_1,\ldots,r_k$. Unfortunately, the bound \eqref{eq.UpperBound} is implicit in terms of the number of nodes and clusters. We elucidate it by applying it to more specific cases, resulting in concrete bounds on the number of edges needed for clustering in these cases.
\begin{cor}
Suppose all cluster sizes $r_1,\ldots,r_k > 1$ are equal. Then there exists a graph of type OS($r_1,\ldots,r_k$) with at most $n = r_1+\cdots+r_k$ edges, and this number of edges is minimal.
\end{cor}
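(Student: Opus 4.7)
Set $r := r_1 = \cdots = r_k$, so that $n = kr$. The upper bound is essentially immediate from Theorem \ref{thm.UpperBound}: for any path $\mathcal{T}$ on $k$ vertices, every term $\mathrm{lcm}(r_i,r_j)$ equals $r$, and $\min_i r_i = r > 1$, so Algorithm \ref{alg.BuildingGraphs} produces a graph of type OS$(r,\ldots,r)$ with $(k-1)r + r = kr = n$ edges.

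The substance of the claim is the matching lower bound, which I would prove by combining Theorem \ref{thm.LowerBound} with a connectivity obstruction. Let $\mathcal{G}$ be any OS$(r,\ldots,r)$ graph with orbits $V_1,\ldots,V_k$, and write $m_{ij}$ for the number of edges between $V_i$ and $V_j$ ($i\neq j$) and $m_{ii}$ for the number of edges within $V_i$. The proof of Theorem \ref{thm.LowerBound} already gives $m_{ij}\geq r$ whenever $m_{ij}>0$. The same transitivity of $\mathrm{Aut}(\mathcal{G})$ on each $V_i$ forces every vertex of $V_i$ to have the same intra-orbit out-degree $d_i$, so $m_{ii} = r d_i$ is always a multiple of $r$. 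Now consider the quotient $\mathcal{C}$ by $V_1,\ldots,V_k$, which is connected. If $\mathcal{C}$ has at least $k$ edges, then $\mathcal{G}$ already has at least $k r = n$ inter-orbit edges and we are done. Otherwise $\mathcal{C}$ is a spanning tree with exactly $k-1$ edges.

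In this remaining case I would argue by contradiction: suppose $\mathcal{G}$ has fewer than $kr$ edges. Then necessarily $m_{ij}=r$ on every tree edge, and $\sum_i m_{ii} < r$, which together with the divisibility property forces $m_{ii}=0$ for every $i$. The count $m_{ij}=r$ combined with the common bipartite degrees $d_i,d_j$ from the proof of Theorem \ref{thm.LowerBound} gives $d_i = d_j = 1$, so each pair of orbits adjacent in $\mathcal{C}$ is joined by a perfect matching. Threading through $\mathcal{C}$ starting from any single vertex of $V_1$ and following the unique matched neighbor in each adjacent orbit therefore produces a connected subgraph containing exactly one vertex from each $V_j$; since there are no intra-orbit edges left to bridge these pieces, $\mathcal{G}$ splits into $r$ disjoint weakly connected components, contradicting weak connectivity. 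Hence $\mathcal{G}$ has at least $n$ edges, and minimality follows.

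I expect the main difficulty to lie in this tree case: one must simultaneously exploit the $\mathrm{lcm}$ bound to pin down the precise bipartite structure (perfect matchings between adjacent orbits) and use the multiplicity-by-$r$ constraint on $m_{ii}$ to rule out "fixing connectivity cheaply" by adding a single intra-orbit edge. Once this threading argument is in place, the remainder of the proof is a straightforward application of the earlier theorems.
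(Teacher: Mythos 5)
Your proof is correct, but your lower-bound argument takes a genuinely different (and considerably longer) route than the paper's. The paper dispatches minimality with a single divisibility observation applied globally: since automorphisms preserve out-degree and each orbit has size $r$, the total edge count equals $r(d_1+\cdots+d_k)$ and is therefore a multiple of $r$; a weakly connected graph needs at least $n-1$ edges, and $n-1$ is not divisible by $r$ because $n=kr$ and $r>1$, so at least $n$ edges are forced. Your argument instead performs a structural case analysis on the quotient: either the quotient has $k$ or more edges (done by the $\mathrm{lcm}$ bound), or it is a spanning tree, in which case you pin down the extremal configuration --- exactly $r$ edges per tree edge, hence perfect matchings between adjacent orbits, no intra-orbit edges --- and derive a contradiction by showing the graph decomposes into $r\ge 2$ disjoint copies of the quotient tree. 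Both proofs rest on the same underlying fact (orbit-transitivity forces equal degrees, hence divisibility of the relevant edge counts by $r$), and your steps are all sound, including the implicit use of $m_{ij}=rd_i$ to upgrade $r\le m_{ij}<2r$ to $m_{ij}=r$. What your version buys is an explicit description of what a hypothetical $(n-1)$-edge graph would have to look like and exactly where connectivity fails; what it costs is length, since the paper's divisibility argument rules out every edge count between $n-1$ and $n$ (indeed, any count not divisible by $r$) in one stroke without needing the tree/matching analysis at all.
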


\begin{proof}
Let $r$ be the size of all clusters, so that $\mathrm{lcm}(r,r) = r$ and the number of clusters is $k=n/r$. Thus, the graph $\mathcal{G}$  outputted by Algorithm \ref{alg.BuildingGraphs} (for an arbitrary $\mathcal{T}$) has exactly $n$ edges, as the summation over the edges has $k-1$ elements. It remains to show that no graph of type OS($r_1,\ldots,r_k$) can have fewer than $n$ edges. As any graph with fewer than $n-1$ edges is not weakly connected \cite{Godsil2001}, it suffices to prove that such a graph cannot have exactly $n-1$ edges. 

First, note that the out-degree is preserved by the action of $\mathrm{Aut}(\mathcal{G})$, meaning that vertices in the same cluster have the same out-degree. Denoting the out-degree of nodes in the $i$-th cluster by $d_i$, the total number of edges is equal to the sum of the out-degree over all nodes, i.e. to $r(d_1+\cdots+d_k)$. In particular, the number of edges, $n-1$ is divisible by $r$. As $n=kr$, $n$ is also divisible by $r$, which together implies that $r$ divides $1$, which is absurd. Thus, no such graph on $n-1$ edges can exist.
\end{proof}

\begin{cor}
Let $r_1,\ldots,r_k$ be positive integers such that $k\ge 2$ and that for every $j,l$, either $r_j$ divides $r_l$ or vice versa. Then there exists a graph of type OS($r_1,\ldots,r_k$) with $n = r_1+\cdots+r_k$ edges.
\end{cor}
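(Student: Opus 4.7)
The plan is to apply Theorem~\ref{thm.UpperBound} directly, with a carefully chosen path $\mathcal{T}$ that exploits the divisibility chain. Since the hypothesis forces any two of the $r_j$'s to be comparable under divisibility, these integers form a chain in the divisibility order, so after relabeling I may assume $r_1 \mid r_2 \mid \cdots \mid r_k$. In particular this also gives $r_1 \le r_2 \le \cdots \le r_k$, so $r_1 = \min_i r_i$.

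Next, I would take $\mathcal{T}$ to be the path $1 - 2 - \cdots - k$ on the relabeled vertices. The key observation is that whenever $a \mid b$, one has $\mathrm{lcm}(a,b) = b$; applied to each consecutive pair in $\mathcal{T}$, this yields $\mathrm{lcm}(r_j, r_{j+1}) = r_{j+1}$ for every $j = 1, \ldots, k-1$. The sum appearing in Theorem~\ref{thm.UpperBound} therefore telescopes:
\begin{equation*}
\sum_{\{i,j\}\in\mathcal{T}} \mathrm{lcm}(r_i,r_j) \;=\; \sum_{j=1}^{k-1} r_{j+1} \;=\; r_2 + r_3 + \cdots + r_k \;=\; n - r_1 \;=\; n - \min_i r_i.
\end{equation*}

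Combining this with the $+\min_i r_i$ term in \eqref{eq.UpperBound} yields a graph of type OS$(r_1,\ldots,r_k)$ with exactly $(n - \min_i r_i) + \min_i r_i = n$ edges produced by Algorithm~\ref{alg.BuildingGraphs}, which is what the corollary asserts. The argument has no real obstacle; the only point worth emphasizing is that the divisibility chain hypothesis is precisely what collapses each $\mathrm{lcm}$ to the larger argument, making this particular path the one that realizes the upper bound \eqref{eq.UpperBound} with a clean closed form.
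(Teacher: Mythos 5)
Your proof is correct and follows essentially the same route as the paper: reorder the $r_i$ into a divisibility chain, run Algorithm~\ref{alg.BuildingGraphs} on the path $1\to 2\to\cdots\to k$, collapse each $\mathrm{lcm}$ to the larger term, and add back $\min_i r_i = r_1$ to get exactly $n$ edges. (Your statement $\mathrm{lcm}(a,b)=b$ for $a\mid b$ is the correct form; the paper's prose has a small typo there but uses the same computation.)
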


\begin{proof}
We reorder the numbers $r_1,\cdots, r_k$ so that $r_l$ divides $r_j$ for $l\le j$. We note that if $r_l$ divides $r_j$, then $\mathrm{lcm}(r_l,r_j) = r_l$. Thus, running Algorithm \ref{alg.BuildingGraphs} with $\mathcal{T} = 1\to 2\to\ldots\to k$ gives a graph type OS($r_1,\ldots,r_k$) with the following number of edges:
$$
\sum_{j = 1}^{k-1} \mathrm{lcm}(r_j,r_{j+1}) + r_1 = \sum_{j=1}^{k-1} r_{j+1} + r_1=\sum_{j=1}^k r_j = n.
$$
\end{proof}

\begin{cor}
Let $r_1,\ldots,r_k$ be positive integers such that $r_j \le q$ for all $j$, and let $n = r_1+\cdots+r_k$. Then there exists a graph of type OS($r_1,\ldots,r_k$) with at most $n+O(q^3)$ edges.
\end{cor}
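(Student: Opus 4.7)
The plan is to apply Theorem \ref{thm.UpperBound} with a carefully chosen path $\mathcal{T}$ that groups clusters of equal size together, and then to bound the resulting sum separately on ``within group'' and ``between group'' edges.

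First, since every $r_j$ lies in $\{1,2,\ldots,q\}$, the multiset $\{r_1,\ldots,r_k\}$ takes at most $q$ distinct values. Let $D \subseteq \{1,\ldots,q\}$ be the set of distinct values appearing, and for each $v \in D$ let $m_v$ denote the number of indices $j$ with $r_j = v$, so that $\sum_{v \in D} m_v = k$ and $\sum_{v \in D} m_v v = n$. Choose any ordering $v_1,v_2,\ldots,v_{|D|}$ of $D$, and take $\mathcal{T}$ to be the path on $k$ vertices obtained by listing the $m_{v_1}$ clusters of size $v_1$ first, then the $m_{v_2}$ clusters of size $v_2$, and so on.

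Next, evaluate the bound in Theorem \ref{thm.UpperBound} for this path. The edges of $\mathcal{T}$ split into two groups. The ``within group'' edges join two consecutive clusters of equal size $v$, and for each $v \in D$ there are exactly $m_v - 1$ such edges, each contributing $\mathrm{lcm}(v,v) = v$. Their total contribution is
\[
\sum_{v \in D}(m_v - 1)v \;=\; n - \sum_{v\in D} v \;\le\; n.
\]
The ``between group'' edges are the $|D|-1$ edges joining the last cluster of one block to the first cluster of the next. Each such edge contributes $\mathrm{lcm}(v_i,v_{i+1}) \le v_i v_{i+1} \le q^2$, so they contribute at most $(|D|-1)q^2 \le q^3$ in total. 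Finally the additive term $\min_i r_i \le q$ contributes at most $q$. Summing the three bounds gives
\[
M \;\le\; n + q^3 + q \;=\; n + O(q^3),
\]
and Theorem \ref{thm.UpperBound} produces a graph of type OS$(r_1,\ldots,r_k)$ with at most this many edges.

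There is no real obstacle: the only decision is the ordering of the path, and once same-size clusters are consecutive the accounting is straightforward. The upper bound is loose in the sense that we used the crude estimate $\mathrm{lcm}(v_i,v_{i+1}) \le v_i v_{i+1}$ for the $O(q)$ many between-block edges, which is exactly where the $O(q^3)$ overhead enters; any cleverer ordering of $D$ (for instance one where consecutive sizes divide each other whenever possible) would only improve the constant hidden in the $O$-notation.
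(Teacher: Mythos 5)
Your proof is correct and follows essentially the same route as the paper: the paper sorts the clusters in nondecreasing order of size (which automatically places equal-size clusters consecutively on the path), charges the $m_v-1$ equal-size adjacencies a total of at most $n$, bounds the at most $q-1$ transitions between distinct sizes by $\mathrm{lcm}(v,v')\le vv'\le q^2$ each for an $O(q^3)$ total, and adds the $\min_i r_i\le q$ term, exactly as you do. The accounting in your sketch is complete and accurate.
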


\begin{proof}
We assume without loss of generality that $r_1\le r_2\le \cdots \le r_k$. Let $m_l$ be the number of clusters of size $l$ for $l=1,2,\ldots,q$, and let $\mathcal{G}$ be the graph constructed by Algorithm \ref{alg.BuildingGraphs} for the path $\mathcal{T} = 1\to 2\to\cdots\to k$. If $r_j=r_{j+1}$ then $\mathrm{lcm}(r_j,r_{j+1}) = r_j$, and $\mathrm{lcm}(r_j,r_{j+1}) \le r_j r_{j+1}$ otherwise. Thus, the number of edges in $\mathcal{G}$ is given by:
\begin{align*}
\sum_{j =1}^{k-1} \mathrm{lcm}(r_j,r_{j+1}) +r_1 \le  \hspace{-7pt}\sum_{\substack{l\in\{1,\ldots,q\},\\ m_l \neq 0}} \hspace{-7pt}(m_l-1)l + \sum_{l=1}^{q-1} l(l-1) + r_1.
\end{align*}
Indeed, for each $l\in\{1,\ldots,q\}$, if there's at least one cluster of size $l$, then there are $m_l -1$ edges in the path $\mathcal{T}$ that touch two clusters of size $l$. The second term bounds the number of edges between clusters of different sizes. We note that $n = \sum_{l=1}^q l m_l$, so the first term is bounded by $n$. As for the second term, we write $l(l-1) \le l^2$ and use the formula $\sum_{l=1}^{q-1} l^2 = \frac{(q-1)q(2q-1)}{6}$. 
Lastly, the last term $r_1$ is bounded by $q$. This completes the proof.
\end{proof}

Theorem \ref{thm.GlobalUpperBound} will show that the upper bound \eqref{eq.UpperBound} is bounded by $\frac{(k-1)n^2}{k^2}$ for any cluster sizes, and it will also give a heuristic for choosing the path $\mathcal{T}$ for Algorithm \ref{alg.BuildingGraphs} guaranteeing the number of edges does not exceed this upper bound. 

\vspace{-15pt}
\subsection{Robust OS-type graphs} \label{subsec.Robust}
The previous section presented a solution to the problem of cluster assignment. Namely, given a collection of homogeneous agents and the desired cluster sizes, the designer constructs the interconnection graph using Algorithm \ref{alg.BuildingGraphs}, and then chooses a controller following Assumption \ref{assump.3}. The analysis depicted in \cite{Sharf2019b} shows that the closed-loop network would then converge to the desired clustered steady-state. However, there are no guarantees on what happens if the network changes abruptly, either due to hardware or software malfunction, a cyber attack, or both. In these cases, one (or more) of the agents effectively become disconnected from the rest of the network, and are effectively removed from the dynamical system and the interaction graph. For this reason, we wish to explore the robustness of OS-type graphs to changes. Ideally, when one (or more) agents are removed from the system due to a malfunction, all other agents should still cluster as before. More specifically, two non-compromised agents that were previously in the same cluster, should still belong to the same cluster. We thus make the following definition:

\begin{defn}
The oriented graph $\mathcal{G} =(\mathbb{V},\mathbb{E})$ with $n$ nodes is said to be \emph{$s$-robustly OS$(r_1,\ldots,r_k)$} for a positive integer $s$ (called the clustering robustness parameter) if the following conditions hold: i) $\mathcal{G}$ is weakly connected; ii) The orbits $V_1,\ldots,V_k$ of the action of $\mathrm{Aut}(\mathcal{G})$ on $\mathcal{G}$ have sizes $r_1,\ldots,r_k$ respectively; and iii) for any set $A\subseteq \mathcal{V}$ such that $|A| \ge n - s$ (comprised of the non-compromised agents), denoting the induced subgraph with node set $A$ as $\tilde{\mathcal{G}}$, the action of $\mathrm{Aut}(\mathcal{\tilde G})$ on $\mathcal{\tilde G}$ has orbits $V_1\cap A, \ldots, V_k \cap A$. Moreover, we say that the oriented graph $\mathcal{G}$ is \emph{totally robustly OS$(r_1,\ldots,r_k)$} if it is $s$-robustly OS$(r_1,\ldots,r_k)$ for $s = n$. 
\end{defn}

\begin{prop} \label{prop.robust}
Let $r_1,\ldots,r_k$ be positive integers, and let $\mathcal{G}= (\mathbb{V},\mathbb{E})$ be a $1$-robustly OS$(r_1,\ldots,r_k)$ with clusters $V_1,\ldots,V_k$. Then for any $i\neq j \in \mathcal{V}$, the induced bi-partite subgraph $\mathcal{G}_{ij}$ is either empty or complete.
\end{prop}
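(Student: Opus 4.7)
The plan is to exploit the fact that $1$-robustness forces a strong uniformity on the neighborhood of every single vertex with respect to each of the other clusters. Specifically, I would fix indices $i \neq j$ in $\{1,\ldots,k\}$, assume $\mathcal{G}_{ij}$ contains at least one edge, and show that in that case every pair $(v,w) \in V_i \times V_j$ is joined by an edge (with its direction determined, by automorphism transitivity on $V_i$ and on $V_j$, in a uniform way).

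First, I would record a preliminary consequence of the original OS structure, following exactly the degree-counting argument used in the proof of Theorem~\ref{thm.LowerBound}: since $\mathrm{Aut}(\mathcal{G})$ acts transitively on $V_i$, every vertex $v\in V_i$ has the same out-degree $d_i^{\mathrm{out}}$ toward $V_j$ and the same in-degree $d_i^{\mathrm{in}}$ from $V_j$; analogously for vertices of $V_j$. In particular, if $\mathcal{G}_{ij}$ is non-empty then at least one of these four numbers is positive, and by the transitivity argument the corresponding quantity is positive for every vertex in the relevant cluster.

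The heart of the proof is the $1$-robustness step. Pick any $v \in V_i$ and let $\tilde{\mathcal{G}}$ be the graph obtained by deleting $v$; by the definition of $1$-robustly OS$(r_1,\ldots,r_k)$, the set $V_j$ (unchanged by the deletion) is an orbit of $\mathrm{Aut}(\tilde{\mathcal{G}})$. Hence every vertex $w \in V_j$ must have the same in-degree and the same out-degree in $\tilde{\mathcal{G}}$. But for $w\in V_j$ the in-degree in $\tilde{\mathcal{G}}$ equals its in-degree in $\mathcal{G}$ minus the indicator of the directed edge $v\to w$, and similarly for out-degrees. Since the in- and out-degrees in $\mathcal{G}$ were already constant across $V_j$, the indicators $\mathbf{1}[v\to w]$ and $\mathbf{1}[w\to v]$ must each be constant in $w \in V_j$ — that is, either every $w\in V_j$ receives an edge from $v$ or none does, and similarly in the opposite direction.

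Finally, I would combine this with the positivity observation from the first step. If $\mathcal{G}_{ij}$ is non-empty then $d_i^{\mathrm{out}} + d_i^{\mathrm{in}} \ge 1$ for every $v\in V_i$, so the "none" alternative is impossible in at least one direction, which forces $v$ to be joined to \emph{all} of $V_j$. Since $v \in V_i$ was arbitrary, every vertex in $V_i$ is joined to every vertex in $V_j$, i.e.\ $\mathcal{G}_{ij}$ is complete bipartite. The only real subtlety is bookkeeping the in- and out-degree contributions separately in the directed setting; there is no deep obstacle beyond translating the definition of $1$-robustness into the statement that vertex deletion preserves intra-orbit degree uniformity.
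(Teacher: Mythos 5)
Your proof is correct and follows essentially the same route as the paper's: delete a single vertex $v$, use the $1$-robustness hypothesis to conclude that degrees remain uniform across the orbit $V_j$ in the reduced graph, and read off from the degree drop that $v$ is adjacent to either all or none of $V_j$, with the non-empty assumption ruling out ``none.'' The only cosmetic differences are that you track in- and out-degrees separately and invoke transitivity of $\mathrm{Aut}(\mathcal{G})$ on $V_i$ up front, whereas the paper works with undirected degrees and extends from one vertex of $V_i$ to all of $V_i$ by reiterating the argument.
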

\begin{proof}
Suppose that $\mathcal{G}_{ij}$ is non-empty, and let $v_i \in V_i$, $v_j\in V_j$ be two connected nodes in it. We claim all vertices in $V_j$ are connected to $v_i$. Indeed, consider the graph $\tilde{\mathcal{G}} \setminus \{v_i\}$ corresponding to the case in which $v_i$ malfunctions. By assumption, both the actions of $\mathrm{Aut}(\mathcal{G})$ and $\mathrm{Aut}(\tilde{\mathcal{G}})$ can send any node in $V_j$ to any other node in $V_j$. In particular, all nodes in $V_j$ share both their $\mathcal{G}$-degree and their $\tilde{\mathcal{G}}$-degree. However, the $\tilde{\mathcal{G}}$-degree of $v_j$ is smaller than its $\mathcal{G}$-degree by one, so this must also be the case for any other node in $V_j$, meaning that node must be connected to $v_i$. Reiterating this argument (while changing $i$ with $j$ and the edge $\{v_i,v_j\}$ with the edge $\{v_i,v_j^\prime\}$ for $v_j^\prime \in V_j$) shows that any node in $V_i$ is connected to any node in $V_j$. 
\end{proof}

Proposition \ref{prop.robust} suggests a necessary update to Algorithm \ref{alg.BuildingGraphs} to get robust OS-type graphs. Indeed, the modulo-based construction of edges between different clusters is replaced by taking all possible edges. The adapted Algorithm \ref{alg.BuildingRobustGraphs} is given below.

\begin{algorithm} [t!]
\caption{Building totally robust OS-type graphs}
\label{alg.BuildingRobustGraphs}
{\bf Input:} A collection $r_1,\ldots,r_k$ of positive integers summing to $n$, and a path $\mathcal{T}$ on $k$ vertices.\\
{\bf Output:} A graph $\mathcal{G}$ of type OS$(r_1,\ldots,r_k)$.
\begin{algorithmic}[1]
\State If $k=1$, return the complete graph on $n$ nodes. Otherwise, continue.
\State Let $\mathcal{G} = (\mathbb{V},\mathbb{E})$ be an empty graph.
\State For any $j=1,\ldots,k$ and $p=1,\ldots,r_j$, add a node with label $v^j_p$ to $\mathbb{V}$.
\State For any edge $\{i,j\}$ in $\mathcal{T}$, $p_i = 1,\ldots,r_i$ and $p_j = 1\ldots,r_j$, add the edge $v^i_p \to v^j_q$  to $\mathbb{E}$.
\State {\bf Return} $\mathcal{G} = (\mathbb{V},\mathbb{E})$.
\end{algorithmic}
\end{algorithm}

\begin{thm}\label{thm.robust}
Let $r_1,\ldots r_k$ be positive integers, and let $n = r_1+\cdots+r_k$. For any path $\mathcal{T}$, Algorithm \ref{alg.BuildingRobustGraphs} outputs a graph $\mathcal{G}$ which is totally robustly OS$(r_1,\ldots,r_k)$, having $\sum_{\{i,j\}\in\mathcal{T}} r_ir_j$ edges.
\end{thm}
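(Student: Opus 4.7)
The proof splits naturally into four tasks: verifying the edge count, weak connectivity of $\mathcal{G}$, identification of the orbits of $\mathrm{Aut}(\mathcal{G})$ as the sets $V_j = \{v_p^j : p=1,\ldots,r_j\}$, and finally the robustness condition for every induced subgraph. The edge count is immediate: the case $k=1$ is handled by returning the complete graph on $n$ nodes, and otherwise step~4 contributes exactly $r_i r_j$ edges for each $\{i,j\} \in \mathcal{T}$, summing to $\sum_{\{i,j\}\in\mathcal{T}} r_i r_j$. Weak connectivity follows because $\mathcal{T}$ is a connected path and each pair of clusters corresponding to a tree edge is joined by a complete bipartite digraph; any two vertices can therefore be connected by concatenating direct edges along $\mathcal{T}$.

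For the orbit structure of $\mathrm{Aut}(\mathcal{G})$ itself, I plan to reuse the two-step template of Theorem~\ref{thm.UpperBound}. To see that every pair of nodes inside $V_j$ is exchangeable, I would write down explicitly the transposition that swaps $v_p^j$ with $v_q^j$ and fixes every other vertex; because the bipartite link between $V_j$ and each tree-neighbor is complete, and because $V_j$ shares no edges with non-neighbors in $\mathcal{T}$, this transposition trivially preserves adjacency. For invariance of each $V_j$ under $\mathrm{Aut}(\mathcal{G})$, I orient $\mathcal{T}$ as $1 \to 2 \to \cdots \to k$ and read off the (in-degree, out-degree) signature: the cluster at the head of the path is the unique one with zero in-degree, the cluster at the tail is the unique one with zero out-degree, and stripping these off iteratively (noting that only $V_{j\pm 1}$ can be adjacent to $V_j$ by construction) forces each $V_j$ to be invariant.

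For total robustness, fix any $A \subseteq \mathbb{V}$ and let $\tilde{\mathcal{G}}$ be the induced subgraph. The same transposition argument shows that permuting vertices inside any surviving set $V_j \cap A$ produces an automorphism of $\tilde{\mathcal{G}}$, because the bipartite link between $V_j \cap A$ and each surviving neighbor $V_i \cap A$ is still a complete bipartite digraph. For invariance of the surviving $V_j \cap A$ under $\mathrm{Aut}(\tilde{\mathcal{G}})$, I would again use the iterative peeling by (in-degree, out-degree) along the sub-path of $\mathcal{T}$ realized in $\tilde{\mathcal{G}}$.

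The step I expect to demand the most care is this last invariance statement, because deleting an entire interior cluster from $\mathcal{T}$ can break $\tilde{\mathcal{G}}$ into several connected components. My plan to handle this is to restrict the peeling argument to one connected component at a time, treating its clusters as a maximal contiguous sub-path of $\mathcal{T}$ and using the preserved chain structure of (in-degree, out-degree) signatures locally. Together with the observation that no automorphism of $\tilde{\mathcal{G}}$ can swap a vertex in $V_j \cap A$ with a vertex in $V_\ell \cap A$ unless their neighborhood profiles (sizes and positions of surviving adjacent clusters along $\mathcal{T}$) coincide, this yields the required orbit identification $V_1 \cap A, \ldots, V_k \cap A$ and completes the verification of total robustness.
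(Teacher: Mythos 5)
Your first three tasks (edge count, connectivity, and the orbit structure of $\mathcal{G}$ itself) match the paper's argument: the paper likewise notes that any product of permutations of the individual $V_i$'s is an automorphism of $\mathcal{G}$, and that invariance of the $V_i$'s follows by the same degree-peeling used in Theorem \ref{thm.UpperBound}. For the robustness part, however, the paper takes a different and much shorter route: it observes that the induced subgraph $\tilde{\mathcal{G}}$ on $A$ is \emph{itself} the output of Algorithm \ref{alg.BuildingRobustGraphs} for the same path $\mathcal{T}$ and the reduced cluster sizes $r_i-c_i$, where $c_i=|V_i\setminus A|$, and then simply invokes the first part of the proof. That self-similarity argument spares you from redoing the transposition and peeling analysis on $\tilde{\mathcal{G}}$.

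The genuine gap is in your final step. The ``observation'' you lean on --- that no automorphism of $\tilde{\mathcal{G}}$ can exchange a vertex of $V_j\cap A$ with one of $V_\ell\cap A$ unless their neighborhood profiles coincide --- points the wrong way: when the profiles \emph{do} coincide, such an automorphism generally exists, and the orbit identification fails. Concretely, take $k=3$, $r_1=r_2=r_3=1$, $\mathcal{T}=1\to 2\to 3$, and delete the middle vertex: $\tilde{\mathcal{G}}$ is two isolated vertices, whose automorphism group has the single orbit $\{v^1_1,v^3_1\}$ rather than the two orbits $V_1\cap A$ and $V_3\cap A$. Peeling ``one connected component at a time'' cannot repair this, because automorphisms are free to permute isomorphic components wholesale. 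To be fair, the same degenerate case is glossed over in the paper: its reduction to the first part tacitly assumes all reduced sizes $r_i-c_i$ are positive, and the statement with $s=n$ fails whenever an entire interior cluster can be wiped out. So your instinct that this step ``demands the most care'' was right, but the argument you sketch does not close it; the claim needs to be restricted (e.g., to sets $A$ meeting every $V_i$, or to $s<\min_i r_i$) before either your direct argument or the paper's recursive one goes through.
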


\begin{proof}
We denote $V_i = \{v_p^i\}_{p=1}^{r_i}$. We first prove $\mathcal{G}$ is an OS-type graph. It is obviously weakly connected as $r_i \ge 1$ for all $i$, and one can prove that the sets $V_i$ are $\mathrm{Aut}(\mathcal{G})$-invariant similarly to the proof of Theorem \ref{thm.UpperBound}. Moreover, one can see that the combination of any permutations on $V_i$ for $i=1,\ldots,k$ gives an automorphism of $\mathcal{G}$, meaning that all nodes inside $V_i$ are equivalent under $\mathrm{Aut}(\mathcal{G})$. In particular, we conclude that $\mathcal{G}$ is an OS$(r_1,\ldots,r_k)$ graph.
Now, let $A$ be a set of non-compromised agents, and let $c_i = |V_i \setminus A|$ be the number of compromised agents in cluster $i$. We observe that the induced subgraph $\tilde{\mathcal{G}}$ on $A$ is identical to the output of the algorithm for the path $\mathcal{T}$ and cluster sizes $r_1-c_1,\ldots,r_k-c_k$, and in particular, the first part of this proof shows that the clusters in $\tilde{\mathcal{G}}$ are exactly $V_1\cap A,\ldots, V_k\cap A$. As the number of edges in $\mathcal{G}$ is obviously $\sum_{\{i,j\}\in\mathcal{T}} r_ir_j$, the proof is complete.
\end{proof}

Regrading sparsity, Proposition \ref{prop.robust} and Algorithm \ref{alg.BuildingRobustGraphs} show that the number of edges in $s$-robust (or totally robust) OS-type graphs can be bounded from above by $M^\prime$ and from below by $m^\prime$, where:
\begin{align} \label{eq.RobustBounds}
    m^\prime = \min_{\mathcal{T}\text{\emph{ tree on $k$ vertices}}} \sum_{\{i,j\}\in\mathcal{T}} r_ir_j, \quad M^\prime = \min_{\mathcal{T}\text{\emph{ path on $k$ vertices}}} \sum_{\{i,j\}\in\mathcal{T}} r_ir_j.
\end{align}

Furthermore, the relation $ab = \gcd(a,b)\mathrm{lcm}(a,b)$ gives a connection between \eqref{eq.RobustBounds}, \eqref{eq.LowerBound} and \eqref{eq.UpperBound}. Namely, if $\rho = \max_{i,j} \gcd(r_i,r_j)$, then $m\le m^\prime \le \rho m$ and $M \le M^\prime \le \rho M$. We note that $\rho \le \max_i r_i$, meaning the number of edges required to get totally robust OS-type graphs isn't much larger than the number of edges required to get OS-type graph, at least when there are no large clusters.

Before moving on to a numerical example, we wish to provide one more result relating the upper bound $M^\prime$ (and hence $M$) to the number of nodes and the number of clusters.
\begin{thm} \label{thm.GlobalUpperBound}
Let $r_1,\ldots,r_k$ be positive integers summing to $n$. Suppose without loss of generality that $r_1 \ge r_2 \ge \cdots \ge r_k$, and let $\mathcal{T}$ be the path $1\to k\to 2\to (k-1)\to\cdots$ on $k$ nodes. The graph $\mathcal{G}$ outputted by Algorithm \ref{alg.BuildingRobustGraphs} has no more than $\frac{(k-1)n^2}{k^2}$ edges.
\end{thm}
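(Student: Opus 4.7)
By Theorem~\ref{thm.robust}, the number of edges in $\mathcal{G}$ equals $\sum_{\{i,j\} \in \mathcal{T}} r_i r_j$, so the claim reduces to proving this sum is bounded by $\frac{(k-1)n^2}{k^2}$. Setting $\bar r = n/k$ and $\pi_k = \lfloor k/2 \rfloor + 1$ (the last node of the path $\mathcal{T}$), the plan is to view $S(r) := \sum_{\{i,j\} \in \mathcal{T}} r_i r_j$ as a quadratic form in $r$ on the polytope $\mathcal{F} = \{r : r_1 \ge \cdots \ge r_k \ge 0, \sum_i r_i = n\}$, and show it is maximized at the uniform point $\bar r \mathbf{1}$, where $S = (k-1)\bar r^2$.

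The argument proceeds by interpolation. Let $v = \bar r \mathbf{1} - r$ and $r_t = r + tv$ for $t \in [0,1]$; since each component $(r_t)_i = (1-t) r_i + t \bar r$ is a convex combination, both the ordering and the sum constraints are preserved, so $r_t \in \mathcal{F}$ throughout. Writing $A$ for the adjacency matrix of $\mathcal{T}$, we have $S(r_t) = S(r) + t\, v^T A r + \frac{t^2}{2}\, v^T A v$, a quadratic in $t$. Its derivative at $t=1$ simplifies to $S'(1) = v^T A(r+v) = \bar r\, v^T A \mathbf{1}$. Since $A \mathbf{1}$ is the degree vector of the path $\mathcal{T}$ (equal to $2$ at interior nodes and to $1$ at the endpoints $1$ and $\pi_k$) and $\sum_i v_i = 0$, a short computation gives $v^T A \mathbf{1} = -v_1 - v_{\pi_k} = r_1 + r_{\pi_k} - 2\bar r$. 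The key lemma in my plan is that $r_1 + r_{\pi_k} \ge 2\bar r$: since each of $r_1,\dots,r_{\pi_k-1}$ is at most $r_1$ and each of $r_{\pi_k},\dots,r_k$ is at most $r_{\pi_k}$, summing gives $\lfloor k/2 \rfloor r_1 + \lceil k/2 \rceil r_{\pi_k} \ge n$, and combining with $r_1 \ge r_{\pi_k}$ yields $k(r_1 + r_{\pi_k}) \ge 2n$. Thus $S'(1) \ge 0$.

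To conclude that $S(r) = S(r_0) \le S(r_1) = (k-1)\bar r^2$, it remains to show that $S(r_t)$ is concave in $t$, i.e.\ that $\sum_{\{i,j\} \in \mathcal{T}} v_i v_j \le 0$ for the increasing mean-zero sequence $v$; then $S'(t)$ is non-increasing on $[0,1]$, so $S'(t) \ge S'(1) \ge 0$ and $S(r_t)$ is non-decreasing, giving the desired bound. This concavity step is the main obstacle. Since $v$ is increasing with mean zero, its low-index entries are nonpositive and its high-index entries are nonnegative, so each edge of $\mathcal{T}$, which joins a low index $l$ with an index of the form $k-l+1$ or $k-l+2$ near the top, typically contributes a nonpositive product $v_i v_j$. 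The delicate part is the edges near the middle of the sorted order, where both endpoints may share the same sign; these terms are controlled by substituting $\sum_i v_i = 0$ to eliminate one variable and then completing the square, a computation I would verify explicitly for the base case $k=4$ and extend to general $k$ by a careful algebraic argument exploiting the alternating structure.
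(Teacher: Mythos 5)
Your reduction to bounding $\sum_{\{i,j\}\in\mathcal{T}} r_i r_j$ over the simplex-with-ordering constraint is the same continuous relaxation the paper uses, and the pieces you actually carry out are correct: the interpolation $r_t = r + t(\bar r\mathbf{1}-r)$ stays feasible, the identity $S'(1)=\bar r\,v^{T}A\mathbf{1}=\bar r\,(r_1+r_{\pi_k}-2\bar r)$ is right (the endpoints of the path $1\to k\to 2\to(k-1)\to\cdots$ are indeed $1$ and $\lfloor k/2\rfloor+1$), and your counting argument for $r_1+r_{\pi_k}\ge 2n/k$ is valid in both parities of $k$. Where the paper argues locally at the maximizer (a first-order exchange argument showing all coordinates must coincide at the optimum), you argue globally along a line segment; that is a legitimately different route in principle.

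The problem is that your argument hinges entirely on the concavity claim $\sum_{\{i,j\}\in\mathcal{T}} v_i v_j \le 0$ for every increasing mean-zero $v$, and you do not prove it — you explicitly defer it to ``a computation I would verify for $k=4$ and extend by a careful algebraic argument.'' This is not a routine verification that can be waved through. First, it is logically indispensable: knowing only $S'(1)\ge 0$, a convex quadratic in $t$ can still satisfy $S(0)>S(1)$, so without $v^{T}Av\le 0$ the conclusion $S(r_0)\le S(r_1)$ does not follow. Second, your heuristic for it (``each edge joins a low index with a high index, so typically contributes a nonpositive product'') breaks down in exactly the regime that matters: if $r$ has one large cluster and many equal small ones, then $v=\bar r\mathbf{1}-r$ has a single very negative entry and $k-1$ positive entries, so all but one edge of $\mathcal{T}$ contributes a \emph{positive} product, and one must show quantitatively that the single term $v_1v_k$ dominates the rest. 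The inequality does appear to be true (it can be checked by hand for small $k$, and for general $k$ one can group the sum as $\sum_j v_{k+1-j}(v_j+v_{j+1})$ and exploit that prefix sums of an increasing mean-zero sequence are nonpositive), but this is the genuinely hard step of your plan and it is currently missing. Until it is supplied for all $k$, the proof is incomplete.
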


\begin{proof}
We first consider the case in which $k$ is odd, i.e., $k = 2\ell+1$ for some integer $\ell$. There are two types of edges in the path $\mathcal{T}$ - edges of the form $i\to (k+1)-i$ for $i=1,2,\ldots, \ell$, and edges of the form $(k+1)-i \to i+1$ for $i=1,2,\ldots,\ell$. Thus, by Theorem \ref{thm.robust}, the number of edges in $\mathcal{G}$ is no larger than the value of the following (continuous) optimization problem:
\begin{align*}
    \nu = \max \left\{\sum_{\{i,j\}\in\mathcal{T}} x_ix_j = \sum_{i=1}^\ell x_i x_{(k+1-i)} + \sum_{i=1}^{\ell-1} x_{i+1} x_{(k+1-i)} : \sum_{i=1}^k x_i = n,~x_1 \ge x_2 \ge \ldots \ge x_k \right\}.
\end{align*}

We let $x_1^\star,\ldots,x_k^\star$ be the optimal solution to the problem above. If we show that all $x_i^\star$ are equal to each other (and hence to $n/k$), this would imply that the number of edges in $\mathcal{G}$ is bounded by the value of the cost function at $x_1^\star,\ldots,x_k^\star$, which is equal to $\nu = \frac{(k-1)n^2}{k^2}$, as claimed. Suppose, for example and heading toward contradiction, that $x_2^\star < x_3^\star$ and that $\ell \ge 3$. The derivative of the cost function $F$ satisfies the following inequality:
\begin{align*}
    \frac{\partial F}{\partial x_2}|_{x^\star} = x_k^\star + x_{k-1}^\star \le x_{k-1}^\star + x_{k-2}^\star = \frac{\partial F}{\partial x_3}|_{x^\star},
\end{align*}
where the inequality follows from the constraints of the optimization problem. Thus, slightly reducing $x_2^\star$ and increasing $x_3^\star$ by the same amount results in a feasible solution with at least the same value of the cost function. This is contradictory to the manner in which $x^\star$ was chosen, meaning that $x_2^\star = x_3^\star$. A similar argument shows that in fact, $x_1^\star = x_2^\star = \cdots = x_\ell^\star$, and that $x_{\ell+1}^\star = \cdots = x_k^\star$. Finally, if the value of the former is different from the value of the latter, one similarly shows that reducing all $x_1^\star,\ldots,x_\ell^\star$ by some small $\epsilon > 0$ and simultaneously increasing all $x_{\ell+1}^\star,\ldots,x_k^\star$ by the same $\epsilon$ gives a feasible solution with at least the same value of the cost function. Therefore, all the $x_i^\star$-s must be equal, and the bound is achieved. The proof for an even number of clusters $k$ is analogous, and is omitted for the sake of brevity and due to lack of space.
\end{proof}

\begin{rem} \label{rem.WorstCaseSparse}
The proof of Theorem \ref{thm.GlobalUpperBound} also gives the worst case scenario for Algorithm \ref{alg.BuildingRobustGraphs}, namely, the maximal number of edges is achieved when all clusters are equally-sized. 
The upper bound in Theorem \ref{thm.GlobalUpperBound} also applies to the non-robust graphs outputted by Algorithm \ref{alg.BuildingGraphs}. In fact, this upper bound is tight, at least in order of magnitude. Indeed, a graph with $O\left(\frac{(k-1)n^2}{k^2}\right)$ edges can be found by taking $r_1,\ldots,r_k$ as the $(L+1),\ldots,(L+k)$-th prime numbers, where the $L$ is chosen as an integer satisfying $L\log L \approx \frac{n}{k}$.
\end{rem}

\section{Numerical Example}
We consider a collection of identical agents, all of the form $\dot{x} = -x+u+\alpha ,y=x$ where $\alpha$ is a log-uniform random variable between $0.1$ and $1$, identical for all agents. In all experiments described below, we considered identical controllers, equal to the static nonlinearity of the form
$$
\mu = a_1 + a_2 (\zeta + \cos(\zeta)),
$$
where $a_1$ was chosen as a Gaussian random variable with mean $0$ and standard deviation $10$, and $a_2$ was chosen as a log-uniform random variable between $0.1$ and $10$. We note that the agents are indeed output-strictly MEIP and the controllers are MEIP. Moreover, the network is homogeneous, so $\mathrm{Aut}(\mathcal{G},\Sigma,\Pi) = \mathrm{Aut}(\mathcal{G})$. Thus, we can use the graphs constructed Algorithm \ref{alg.BuildingGraphs}, to force a clustering behavior.

We first consider a network of $n=15$ agents and tackle the cluster synthesis problem with five equally-sized clusters, i.e., $r_1=r_2=r_3=r_4=r_5=3$. One possible graph forcing these clusters, as constructed by Algorithm  \ref{alg.BuildingGraphs} for a path $\mathcal{T}$ of the form $1\to 2\to 3\to 4\to 5$, can be seen in Fig. \ref{fig.ClusterSynthesisTheoremGraph1}, along with the agents' trajectories for the closed-loop system. 

Second, we consider a network of $n=10$ agents with desired cluster sizes $r_1=1,r_2=2,r_3=3,r_4=4$. We build a graph forcing these cluster sizes by using Algorithm  \ref{alg.BuildingGraphs} for a path $\mathcal{T}$ of the form $4\to 2\to 1\to 3$, which is the minimizer in \eqref{eq.UpperBound}. The graph can be seen in Fig. \ref{fig.ClusterSynthesisTheoremGraph1}, along with the agents' trajectories for the closed-loop system. 
 \begin{figure}[t!]
 \begin{center}
 	\hspace{.5cm}\subfigure[Graph forcing cluster sizes $r_1=r_2=r_3=r_4=r_5=3$. Nodes with the same color will be in the same cluster.] {\scalebox{.80}{\includegraphics{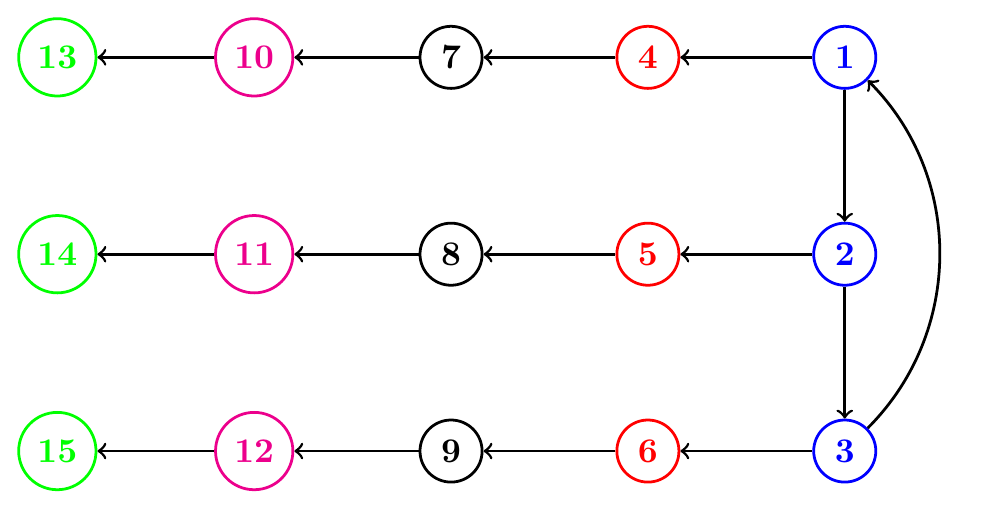}}} \hspace{0.5cm}
	\subfigure[Agent's trajectories for the closed-loop system. Colors correspond to node colors in the graph.] {\scalebox{.5}{\includegraphics{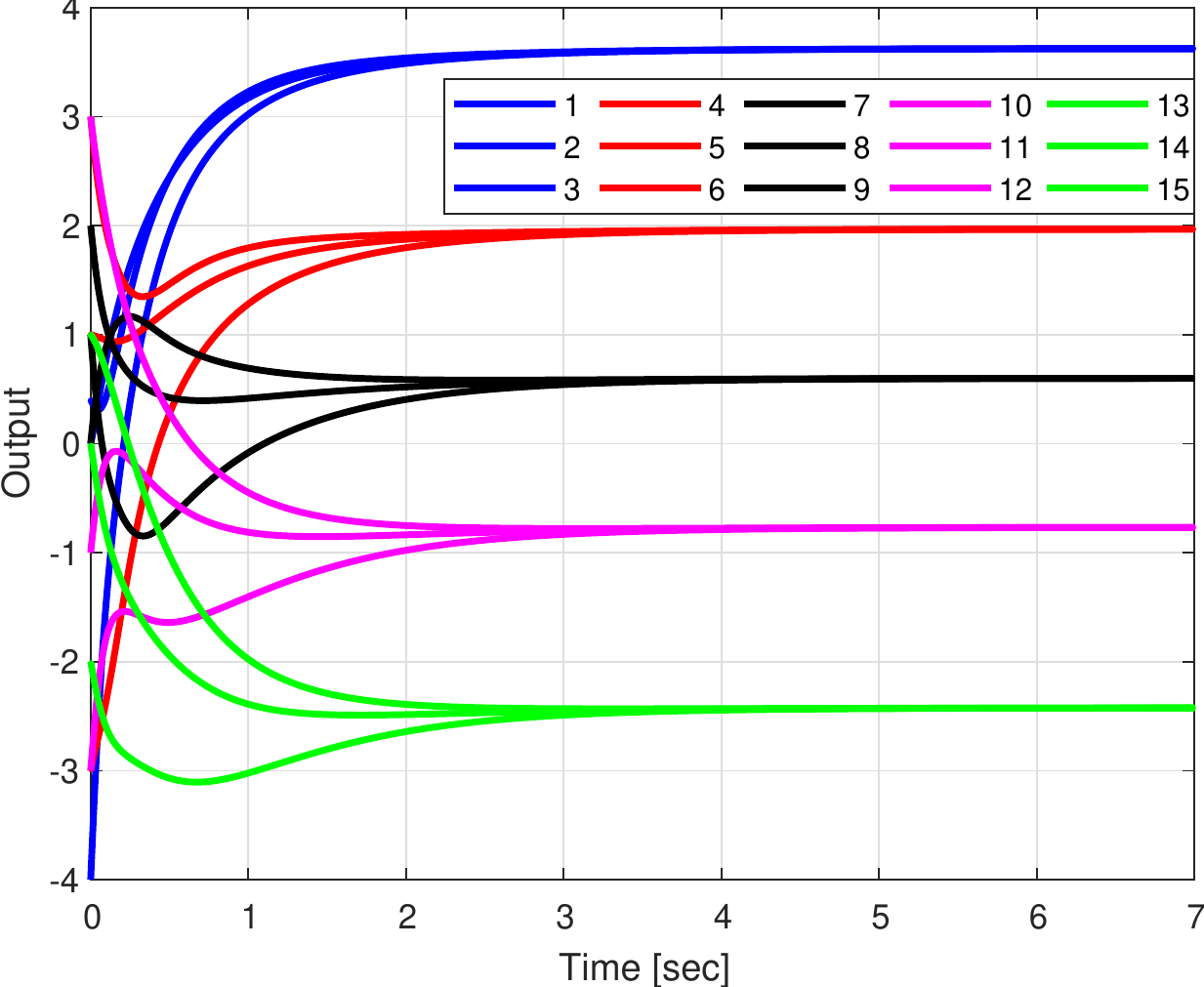}}} 
   \caption{First example of graphs solving the cluster synthesis problem, achieved by running Algorithm \ref{alg.BuildingGraphs}.}\hspace{.5cm}
	\label{fig.ClusterSynthesisTheoremGraph1}
	\vspace{-25pt}
 \end{center}
 \end{figure}
 \begin{figure}[t!]
 \begin{center}
 	\hspace{0.5cm}\subfigure[Graph forcing cluster sizes $r_1=1,r_2=2,r_3=3,r_4=4$. Nodes with the same color will be in the same cluster.] {\scalebox{.80}{\includegraphics{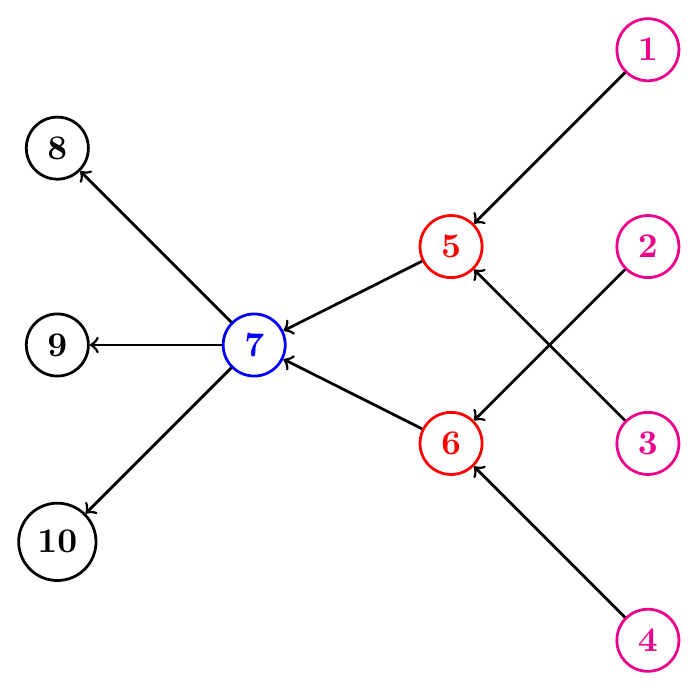}}}\hspace{0.5cm} \vspace{0.2cm}
	\subfigure[Agent's trajectories for the closed-loop system. Colors correspond to node colors in the graph.] {\scalebox{.5}{\includegraphics{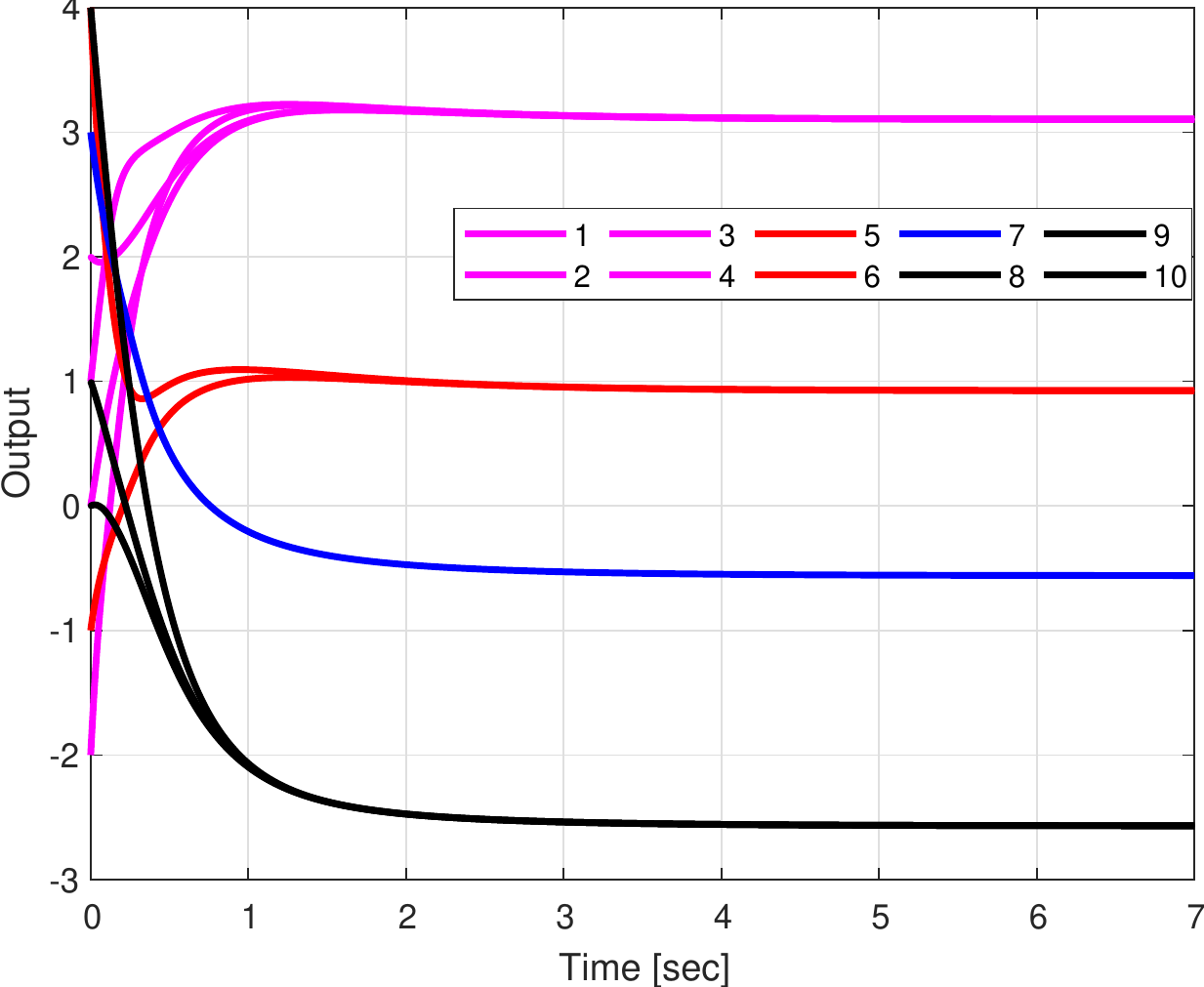}}} \hspace{0.5cm}
   \caption{First example of graphs solving the cluster synthesis problem, achieved by running Algorithm \ref{alg.BuildingGraphs}.}
	\label{fig.ClusterSynthesisTheoremGraph2}
	\vspace{-15pt}
 \end{center}
 \end{figure}

\section{Conclusions}\label{sec.conclusion}
This work explored the problem of cluster assignment for homogeneous diffusively-coupled multi-agent systems. We relied upon the clustering analysis results of \cite{Sharf2019b} to exhibit synthesis procedures that guarantee a clustering behaviour, no matter the desired number of clusters nor their size. This was done by prescribing graph synthesis algorithms which have certain symmetry properties that reflect the desired clustering assignment. When such graphs are used in a network comprised of weakly equivalent agent and controller dynamics, the network converges to a cluster configuration. We further studied the robustness of such MAS to node malfunctions, and presented a graph synthesis procedure which guarantees the clustering structure is robust to any number of agent malfunctions. The results were demonstrated in a numerical example.

\vspace{-15pt}
\bibliography{main}%

\begin{thebibliography}{10}
\providecommand{\url}[1]{\texttt{#1}}
\providecommand{\urlprefix}{URL }
\expandafter\ifx\csname urlstyle\endcsname\relax
  \providecommand{\doi}[1]{doi:\discretionary{}{}{}#1}\else
  \providecommand{\doi}{doi:\discretionary{}{}{}\begingroup
  \urlstyle{rm}\Url}\fi

\bibitem{Altafini2013}
C.~Altafini, \textit{Consensus problems on networks with antagonistic
  interactions}, IEEE Transactions on Automatic Control \textbf{58} (2013),
  no.~4, 935--946.

\bibitem{Burger2011TAC}
M.~B{\"{u}}rger, D.~Zelazo, and F.~Allg{\"{o}}wer, \textit{{Hierarchical
  Clustering of Dynamical Networks Using a Saddle-Point Analysis}}, IEEE
  Transactions on Automatic Control \textbf{58} (2013), no.~1, 113--124.

\bibitem{Burger2014}
M.~B{\"u}rger, D.~Zelazo, and F.~Allg{\"o}wer, \textit{Duality and network
  theory in passivity-based cooperative control}, Automatica \textbf{50}
  (2014), no.~8, 2051--2061.

\bibitem{Chapman2014}
A.~Chapman and M.~Mesbahi, \textit{On symmetry and controllability of
  multi-agent systems}, Proc. IEEE Conference on Decision and Control  (2014),
  625--630.

\bibitem{Chapman2015}
A.~Chapman and M.~Mesbahi, \textit{State controllability, output
  controllability and stabilizability of networks: A symmetry perspective},
  Proc. IEEE Conference on Decision and Control  (2015), 4776--4781.

\bibitem{Chopra2006}
N.~Chopra and M.~W. Spong, \textit{Advances in Robot Control: From Everyday
  Physics to Human-Like Movements}, chap. Passivity-Based Control of
  Multi-Agent Systems, Springer, 2006. 107--134.

\bibitem{Cormen2009}
T.~H. Cormen et~al., \textit{Introduction to Algorithms, Third Edition}, 3rd
  edn., The MIT Press, 2009.

\bibitem{Dummit2004}
D.~Dummit and R.~Foote, \textit{Abstract Algebra, Third Edition}, Wiley, 2004.

\bibitem{Godsil2001}
C.~Godsil and G.~Royle, \textit{Algebraic Graph Theory}, Graduate Texts in
  Mathematics, Springer New York, 2001.

\bibitem{Han2013}
Y.~Han, W.~Lu, and T.~Chen, \textit{Cluster consensus in discrete-time networks
  of multiagents with inter-cluster nonidentical inputs}, IEEE Transactions on
  Neural Networks and Learning Systems \textbf{24} (2013), no.~4, 566--578.

\bibitem{Jain2018}
A.~Jain, M.~Sharf, and D.~Zelazo, \textit{Regulatization and feedback
  passivation in cooperative control of passivity-short systems: A network
  optimization perspective}, IEEE Control Systems Letters \textbf{2} (2018),
  731--736.

\bibitem{Lancichinetti2012}
A.~Lancichinetti and S.~Fortunato, \textit{{Consensus clustering in complex
  networks}}, Sci. Rep. \textbf{2} (2012), no.~1, 336.

\bibitem{Notarstefano2013TAC}
G.~Notarstefano and G.~Parlangeli, \textit{Controllability and observability of
  grid graphs via reduction and symmetries}, IEEE Transactions on Automatic
  Control \textbf{58} (2013), no.~7, 1719--1731.

\bibitem{OlfatiSaber2007cdc}
R.~Olfati-Saber, \textit{Distributed kalman filtering for sensor networks},
  Proc. IEEE Conference on Decision and Control  (2007), 5492--5498.

\bibitem{Passino2002}
K.~M. Passino, \textit{Biomimicry of bacterial foraging for distributed
  optimization and control}, IEEE Control Systems Magazine \textbf{22} (2002),
  no.~3, 52--67.

\bibitem{Qin2013}
J.~Qin and C.~Yu, \textit{Cluster consensus control of generic linear
  multi-agent systems under directed topology with acyclic partition},
  Automatica \textbf{49} (2013), no.~9, 2898 -- 2905.

\bibitem{Rahmani2007}
A.~Rahmani and M.~Mesbahi, \textit{Pulling the strings on agreement: Anchoring,
  controllability, and graph automorphisms}, Proc. American Control Conference
  (2007), 2738--2743.

\bibitem{Schnitzler2005}
A.~Schnitzler and J.~Gross, \textit{Normal and pathological oscillatory
  communication in the brain}, Nature reviews Neuroscience \textbf{6} (2005),
  285--96.

\bibitem{Sharf2019a}
M.~{Sharf}, A.~{Jain}, and D.~{Zelazo}, \textit{A geometric method for
  passivation and cooperative control of equilibrium-independent
  passivity-short systems}, IEEE Transactions on Automatic Control \textbf{66}
  (2021), no.~12, 5877--5892.

\bibitem{Sharf2017}
M.~Sharf and D.~Zelazo, \textit{A network optimization approach to cooperative
  control synthesis}, IEEE Control Systems Letters \textbf{1} (2017), no.~1,
  86--91.

\bibitem{Sharf2019c}
M.~{Sharf} and D.~{Zelazo}, \textit{Network feedback passivation of
  passivity-short multi-agent systems}, IEEE Control Systems Letters \textbf{3}
  (2019), no.~3, 607--612.

\bibitem{Sharf2019b}
M.~{Sharf} and D.~{Zelazo}, \textit{Symmetry-induced clustering in multi-agent
  systems using network optimization and passivity}, Proc. Mediterranean
  Conference on Control and Automation  (2019), 19--24.

\bibitem{Sharf2022a}
M.~{Sharf} and D.~{Zelazo}, \textit{Cluster assignment in multi-agent systems},
  Asian Control Conference  (2022), 947--952.

\bibitem{Xiao2004}
L.~Xiao and S.~Boyd, \textit{Fast linear iterations for distributed averaging},
  Systems and Control Letters \textbf{53} (2004), no.~1, 65 -- 78.

\end{thebibliography}



\end{document}